\newtheorem{theorem}{Theorem}
\newtheorem{corollary}{Corollary}
\newtheorem{lemma}{Lemma}
\newtheorem{fact}{Fact}
\newtheorem{assumption}{Assumption}
\newtheorem{observation}{Observation}
\newcommand{\eps}{\varepsilon}
\newcommand{\OO}{\ensuremath{{O}}}
\newcommand{\BPA}{\textsc{bpa}}
\newcommand{\BPAA}{\textsc{abpa}}
\newcommand{\OPT}{\textsc{opt}}
\newcommand{\Sstar}{S^*}
\newcommand{\Sprime}{S'}
\newcommand{\Salg}{S}
\newcommand{\ALG}{\textsc{alg}}
\newcommand{\ADV}{\textsc{adv}}
\newcommand{\bpAdvUp}{\frac{1}{\eps}\log\left(\frac{2}{\eps^2}\right) + \log\left(\frac{2}{\eps^2}\right)  + 3}
\newcommand{\msTypes}{\left\lceil \log_{1+\eps}\frac1\eps \right\rceil}
\newcommand{\schdBigOAdv}{\OO\left(\frac{1}{\eps}\log \frac1\eps \right)}
\newcommand{\half}[1]{\frac{1}{2^{#1}}}
\newcommand{\bpNull}{\bot}
\newcommand{\sAdvice}{\begin{center}\begin{tabular}{l r p{0.6\textwidth}}}
\newcommand{\fAdvice}{\end{tabular}\end{center}}
\begin{document}

\begin{frontmatter}

\title{Online Algorithms with Advice for \\ Bin Packing and Scheduling Problems}

\author[inst1]{Marc P. Renault\fnref{curMarc,netoc}}
\ead{marc.renault@lip6.fr}

\author[inst1]{Adi Ros\'{e}n\fnref{netoc}}
\ead{adiro@liafa.univ-paris-diderot.fr}
\address[inst1]{CNRS and Universit\'e Paris Diderot, France}

\author[inst3]{Rob van Stee}
\ead{rob.vanstee@leicester.ac.uk}
\address[inst3]{University of Leicester, Department of Computer Science, University Road, Leicester, LE1 7RH, United Kingdom}

\fntext[curMarc]{Present address: Sorbonne Universit\'{e}s, UPMC Univ Paris 06, UMR 7606, LIP6, F-75005, Paris, France} 
\fntext[netoc]{Research supported in part by ANR project NeTOC.}   

\begin{abstract}
We consider the setting of online computation with advice and study the bin packing problem
and a number of scheduling problems. 
We show that it is possible, for any of these
problems, to arbitrarily  approach a competitive ratio of $1$ with only a constant
number of bits of advice per request. For the bin packing problem, we give an online algorithm
with advice that is
 $(1+\eps)$-competitive and uses $\OO\left(\frac{1}{\eps}\log \frac{1}{\eps} \right)$ bits of advice  per request.
For scheduling on $m$ identical machines, with the objective function of any of makespan, machine covering
and the minimization of the $\ell_p$ norm, $p >1$, we give similar results. We give online algorithms
with advice which are $(1+\eps)$-competitive ($(1/(1-\eps))$-competitive for machine covering)  and also use
$\OO\left(\frac{1}{\eps}\log \frac{1}{\eps} \right)$ bits of advice per request. We complement our results by giving a lower bound that shows
that for any online algorithm with advice to be optimal, for any of the above scheduling problems,
a non-constant number (namely, at least $\left(1 - \frac{2m}{n}\right)\log m$, where $n$ is the number of jobs and $m$ is the number of machines) of bits of advice per request is needed.
\end{abstract}

\begin{keyword}
  online algorithms \sep
  online computation with advice \sep
  competitive analysis \sep
  bin packing \sep
  machine scheduling
\end{keyword}

\end{frontmatter}

\section{Introduction}

Online algorithms are algorithms that receive their input one piece at a time.  An online algorithm must make an irreversible decision on the processing of the current piece of the input before it receives the next piece, incurring a cost for this processing. The method of choice to analyze such algorithms is {\em competitive analysis}~\cite{BorodinBook1998}. In this framework, the decisions of the online algorithm must be taken with no knowledge about future pieces of input.  In competitive analysis, one measures the quality of an online algorithm by analyzing its {\em competitive ratio}, i.e.\ the worst-case ratio, over all possible finite request sequences, of the cost of the online algorithm and the cost of an optimal offline algorithm that has full knowledge of the future. 
In general, there are no computational assumptions made about the online algorithm, and thus competitive analysis is concerned with quantifying the difference between no knowledge of the future and full knowledge of the future.

In many situations, however, an algorithm with no knowledge of the future is unreasonably restrictive \cite{BorodinBook1998,DorrigivL2005}.  Furthermore, ``classical'' competitive analysis, as described above, is only concerned with one point on the spectrum of the amount of information
about the future available to the online algorithm (i.e.\ no information at all). In order both to address the lack of a general model for situations of partial information about the future, and to try to quantify the interplay between
the amount of information about the future and the achievable competitive ratio, a framework for a more refined analysis
of online algorithms, which attempts to analyze online algorithms with partial information about the future, has been proposed and studied in recent years, e.g. ~\cite{BockenhauerKKKM09,EmekFKR11,BockenhauerKKK11,KommK11,BockKKR14,RenaultR2015,DorrigivHZ2012,BoyarKLL14}. 
 
This framework was dubbed {\em online computation with advice} and, roughly speaking
(see Section~\ref{pre:advice} for a formal definition), works as follows. The online algorithm, when 
receiving each piece of input $r_i$, can query the adversary about the future by specifying 
a function $u_i$ going from the universe of all input sequences to a universe of all binary strings
of length $b$, for some $b \geq 0$. The adversary must respond with the value of the function 
on the whole input sequence (including the parts not yet revealed to the online algorithm). 
Thus, the online algorithm receives, with each piece of input, $b$ bits of information about 
the future. We call these {\em bits of advice}.  The decisions of the online algorithm can 
now depend not only on the input seen so far, but also on the advice bits received so far which 
reveal some information about the future. 
The online algorithm can thus improve its competitive ratio. 
We are typically interested in the interplay between
the amount of information received about the future 
and the achievable competitive ratio. This model was introduced by Emek et al.~\cite{EmekFKR11}. Another variant of the setting of online algorithms with advice was proposed by 
 B{\"o}ckenhauer et al.~\cite{BockenhauerKKKM09} (see Section~\ref{intro:rw}). Recent years have seen 
 an emergence of works on online computation with advice in both variants of the model, e.g. 
 studying problems such as the $k$-server problem~\cite{EmekFKR11,BockenhauerKKK11,RenaultR2015,GuptaKL13}, metrical task systems~\cite{EmekFKR11}, the knapsack problem~\cite{BockKKR14}, the bin packing problem~\cite{BoyarKLL14}, 2 value buffer management~\cite{DorrigivHZ2012}, reordering buffer management problem~\cite{AdamaszekRRvS2013} 
and more. 
 
In this paper, we study  bin packing, and scheduling on $m$ identical machines with the objective 
functions of the makespan, machine covering, and minimizing the $\ell_p$ norm
in the framework of online computation with advice.  All of these problems
have been widely studied in the framework of online algorithms (without advice), and 
in the framework of offline approximation algorithms, e.g.~\cite{VegaL1981,HochbaumS1987,ChenVW1994,Alon1997,Sgall1997,Woeginger1997,AvidorAS1998,AzarE1998,FleischerW2000,Seiden2001,Albers2002,RudinC2003,BaloghBG2012}. For all of these
problems, we show that it is possible to arbitrarily approach a competitive ratio of $1$ with a
constant number of bits of advice per request, i.e.\ we give $(1+\eps)$-competitive 
deterministic algorithms
with advice that use $f(1/\eps)$ bits of advice per request (for some polynomial function $f$).
It is worthwhile noting that this is certainly not the case for all online problems, as 
non-constant lower bounds on the amount of advice required to have a competitive ratio arbitrarily close to $1$ are known for some online problems
(e.g. for metrical task systems~\cite{EmekFKR11}). Furthermore, for all the problems we study, lower bounds
bounded away from $1$ are known for the competitive ratio achievable by online
algorithms without advice.  We further show, for the scheduling problems, that a non-constant number of bits of advice is needed for an online 
algorithm with advice to be optimal (a similar result for bin packing has been given
in~\cite{BoyarKLL14}). 
 
\subsection{Related Work.}\label{intro:rw}
 The model of online computation with advice that we consider in the present paper was
introduced by Emek et al~\cite{EmekFKR11}. 
In the model of~\cite{EmekFKR11}, the advice is a fixed amount that is revealed in an online manner with every request. This model is referred to as the \emph{online advice model}.
Another variant of the model of online algorithms with advice
was proposed by B{\"o}ckenhauer et al.~\cite{BockenhauerKKKM09}. In this variant, the 
advice is not given to the algorithm piece by piece with each request, but rather a single
tape of advice bits is provided to the algorithm.  This model is termed the \emph{semi-online advice model} since the algorithm can read from the advice tape at will and, therefore, could read all the advice at the beginning prior to receiving any requests. For the semi-online advice model, one then 
analyzes the total number of advice bits read from the tape as a function of the length of the
input (and the competitive ratio).  
A number of works have analyzed various online 
problems in the framework of online algorithms with advice (in both variants).
For example: the $k$-server problem has a competitive ratio of at most $\Big\lceil\frac{\lceil\log k\rceil}{b-2}\Big\rceil$~\cite{RenaultR2015}, where $b$ is the number of bits of advice per request; the metrical task system problem has tight bounds on the competitive ratio of $\Theta(\log N / b)$~\cite{EmekFKR11}; the unweighted knapsack problem has a competitive ratio of $2$ with $1$ bit of advice in total and $\Omega(\log(n))$ bits are required to further improve the competitive ratio~\cite{BockKKR14},
 the 2 value buffer management problem has a competitive ratio of $1$ with  $\Theta((n/B)\log B)$ bits of advice ($n$ is the length of the request sequence and $B$ is the size of the buffer)~\cite{DorrigivHZ2012}, and the reordering buffer problem, for any $\eps > 0$, has a $(1+\eps)$-competitive algorithm which uses only a constant (depending on $\eps$) number of advice bits per input item~\cite{AdamaszekRRvS2013}.

To the best of our knowledge, the only scheduling problems studied to date in the
framework of online computation with advice is a special case of the job shop scheduling 
problem~\cite{BockenhauerKKKM09,KommK11}
, and, makespan scheduling on identical machines in \cite{Dohrau2015}. In both cases, the semi-online advice model is used. In \cite{Dohrau2015},
an algorithm that is $(1+\eps)$-competitive and uses advice of constant size in total is presented. 
Boyar et al.~\cite{BoyarKLL14}
studied the bin packing problem with advice, using the semi-online advice model of 
~\cite{BockenhauerKKKM09} and presented a $3/2$-competitive algorithm, using $\log n + o(\log n)$ bits of advice in total, and a $(4/3 + \eps)$-competitive algorithm, using $2n + o(n)$ bits of advice in total, where $n$ is the length of the request sequence.  As both algorithms rely on reading $\OO(\log(n))$ bits of advice prior to receiving any requests, they would use $\OO(\log(n))$ bits of advice per request in the model used in this paper.  The $3/2$-competitive algorithm can be converted into an algorithm that uses $1$ bit of advice per request.  We are not aware of a similar simple conversion for the $(4/3 + \eps)$-competitive algorithm. It should be noted that in the online advice model, an algorithm receives at least $1$ bit of advice per request, i.e.\ at least linear advice in total. Finally, they show that an online algorithm with advice requires at least $(n - 2N)\log N$ bits of advice in total to be optimal, where $N$ is the optimal number of bins.
In \cite{AngelopoulosDKRR2015}, an algorithm is presented that has a competitive ratio that can be arbitrarily close to $1.47012$ and uses constant advice in total. 
Further, they show that linear advice in total is required for a competitive ratio better than $7/6$. 
 
For online bin packing without advice, the best known lower bound on the competitive ratio is 1.54037 due to Balogh et al.~\cite{BaloghBG2012} and the best known deterministic upper bound on the competitive ratio is 1.58889 due to Seiden~\cite{Seiden2001}. Chandra~\cite{Chandra92} showed that all known lower bounds can be shown to apply to randomized algorithms.

For online scheduling on $m$ identical machines without advice, Rudin and Chandrasekaran~\cite{RudinC2003} presented the best known deterministic lower bound of $1.88$ on the competitive ratio for minimizing the makespan.
The best known deterministic upper bound on the competitive ratio for minimizing the makespan, due to Fleischer et al.~\cite{FleischerW2000}, is 1.9201 as $m \to \infty$.  The best known randomized lower bound on the competitive ratio for minimizing the makespan is $1/(1 - (1 - 1/m)^m)$, which tends to $e/(e-1) \approx 1.58$ as $m \to \infty$, and it was proved independently by Chen et al.~\cite{ChenVW1994} and Sgall~\cite{Sgall1997}. The best known randomized algorithm, due to Albers~\cite{Albers2002}, has a competitive ratio of 1.916.

For machine covering, Woeginger~\cite{Woeginger1997} proved tight $\Theta(m)$ bounds on the competitive ratio for deterministic algorithms, and Azar and Epstein~\cite{AzarE1998} showed a randomized lower bound of $\Omega(\sqrt{m})$ and a randomized upper bound of $O(\sqrt{m}\log m)$.  
Also, Azar and Epstein considered the case where the optimal value is known to the algorithm and showed that, for $m \ge 4$, no deterministic algorithm can achieve a competitive ratio better than 1.75.

In the offline case, Fernandez de la Vega and Lueker~\cite{VegaL1981} presented an asymptotic polynomial time approximation scheme (APTAS) for the bin packing problem. Hochbaum and Shmoys~\cite{HochbaumS1987} developed a polynomial time approximation scheme (PTAS) for the makespan minimization problem on $m$ identical machines. Subsequently, Woeginger~\cite{Woeginger1997} presented a PTAS for the machine covering problem on $m$ identical machines and Alon et al.~\cite{Alon1997} presented a PTAS for the $\ell_p$ norm minimization problem on $m$ identical machines.

\subsection{Our Results.}
We give a deterministic online algorithm with advice for bin packing that, for $0 < \eps \le 1/2$, achieves a competitive ratio
of $1+\eps$, and uses $\OO\left(\frac{1}{\eps}\log\frac{1}{\eps}\right)$  bits of advice 
per request. For scheduling on $m$ identical machines, we consider the objective
functions of {\em makespan}, {\em machine covering} and {\em minimizing the $\ell_p$ norm} for $p > 1$. 
For any of these, we give online algorithms with 
advice that, for $0 < \eps < 1/2$, are $(1+\eps)$-competitive 
($(1/(1-\eps))$-competitive for machine covering) and use $\schdBigOAdv$ bits of advice per request. 

We complement our results by showing that, for any of the scheduling problems we consider,
an online algorithm with advice needs at least  $\left(1 - \frac{2m}{n}\right)\log m$ bits
of advice per request to be optimal, where $n$ is the number of jobs and $m$ is the number of machines. This lower bound uses techniques similar to those used by the analogous lower bound for bin packing found in \cite{BoyarKLL14}.
We note that with $\lceil \log m \rceil$ bits a trivial algorithm that indicates for each
job on which machine it has to be scheduled is optimal. 

\subsection{Our Techniques.}
Common to all our algorithms for the packing and scheduling problems is the technique of classifying the input items, according to their size, into a constant number of classes, depending on $\eps$. For the bin packing problem, there are a constant number of groups of a constant number of items with both of these constants depending on $\eps$. For the scheduling problems, the sizes of the items in one class differ only by a multiplicative constant factor, depending on $\eps$. We classify all the items except the smallest ones in this way, where the bound on the size of the items not classified again depends on $\eps$. This classification is done explicitly in the scheduling algorithms, and implicitly in the bin packing algorithms. We then consider an optimal packing (resp. schedule) for the input sequence and define {\em patterns} for the bins (the machines) that describe how the critically sized items (jobs) are packed (scheduled) into the bins (machines). The advice bits indicate with each input item into which bin (machine) pattern it should be packed (scheduled). For the bin packing problem, all but the largest classified items can be packed into the optimal number of bins, according to the assigned pattern. The remaining items cause an
 $\eps$ multiplicative increase in the number of bins used. For the scheduling problems, since items in the same class are ``similar'' in size, we can schedule the items such that the class of the items of each machine matches the class of those in the optimal schedule while being within an $\eps$ factor of the optimal. For both the bin packing problem and the scheduling problems, the very small items (jobs) have to be treated separately; in both cases, the items (jobs) are packed (scheduled) while remaining within an $\eps$ multiplicative factor of the optimal. 

Our techniques for these algorithms are similar to those of~\cite{VegaL1981,HochbaumS1987,Woeginger1997,Alon1997}. In
particular, we use the technique of rounding and grouping the items. The main difficulty in getting our algorithms to work stems from the fact that we must encode the necessary information using only a constant number of advice bits per request. In particular, the number of advice bits per request cannot depend on the size of the input or the size of the instance (number of bins/machines). Further, for the online advice mode, the advice is received per request and this presents additional challenges as the advice has to be presented sequentially per request such that the algorithm will be able to schedule the items in an online manner.

The scheduling objective functions that we consider are all a function of the loads of the machines. This relates closely to the bin packing problem. The main differences are that the bins in the bin packing problem have a maximum capacity and the goal is to  minimize the number of bins used. For scheduling on $m$ identical machines, we have no such capacity constraint (i.e.\ there is no maximum load per machine) but can use at most $m$ machines. This changes the nature of the problem and requires similar but different ideas for the approximation schemes for scheduling as compared to the approximation schemes for bin packing. This is also the case for the online algorithms with advice presented in this paper. The difference is most noticeable in the nature of the grouping of the items that are done implicitly in the case of bin packing based on a ranking of the size of the items and explicitly in the case of scheduling based on a threshold value. 

\section{Preliminaries}\label{sec:pre}

Throughout this paper, we denote by $\log$ the logarithm of base $2$. 
For simplicity of presentation, we assume that $1/\eps$ is a natural number.

 \subsection{Online Advice Model.}\label{pre:advice}
   We use the model of online computation with advice introduced in~\cite{EmekFKR11}. 
A deterministic online algorithm with advice is defined by the sequence of pairs $(g_i, u_i)$, $i \geq 1$. 
The functions $u_i : R^* \to U$ are the query functions where $R^*$ is the set of all finite request sequences, and $U$ is an advice space of all binary strings
of length $b$, for some $b \geq 0$. 
For a given request sequence $\sigma \in R^*$, the advice received with each request $r_i \in  \sigma$ is the value of the function $u_i(\sigma)$. The functions $g_i : R^i \times U^i \to A_i$ are the action functions, 
where $A_i$ is the action space of the algorithm at step $i$.
That is, for request $r_j$, the action of the online algorithm with advice is $a_j = g_j(r_1,\ldots,r_j,u_1,\ldots,u_j)$, i.e.\ a function of the requests and advice received to date. 

\subsection{Competitive Analysis.}
Let $\ALG(\sigma)$ be the cost for an online algorithm $\ALG$ to process $\sigma$ and let $\OPT(\sigma)$ be the optimal cost.  For a minimization problem, 
an online algorithm is {\em $c$-competitive} if, for all finite request sequences $\sigma$, $\ALG(\sigma) \le c\cdot\OPT(\sigma) + \zeta$, where $\zeta$ is a constant that does not depend on $\sigma$. For a maximization problem, an algorithm $\ALG$ is $c$-competitive if $\ALG(\sigma) \geq \frac{1}{c}\OPT(\sigma) - \zeta$.

\subsection{Bin Packing.}\label{sec:bp} 
An instance of the {\em online bin packing problem} consists of a request sequence $\sigma$, and an initially empty set $B$ of bins of capacity $1$.  Each $r_i \in \sigma$ is an item with size $0 < s(r_i) \le 1$.  The goal is to assign all the items of $\sigma$ to bins such that, for each bin $b_j \in B$, $\sum_{r_i \in b_j} s(r_i) \le 1$ and $|B|$ is minimized. 
The optimal number of bins ($|B^{OPT}|$) is denoted by $N$.  
An item \emph{fits} into a bin if its size plus the size of previously packed items in that bin is at
most $1$.
For an item $r_i \in b_j$, where $b_j$ is a bin in the packing $B$,
we will write $r_i \in B$.  
In order to define part of the advice used by our  algorithms, we use a common heuristic for bin packing, {\sc next fit} \cite{Johnson1973}. For completeness, we indicate here that 
the heuristic {\sc next fit} packs the item into the current bin if it fits. 
Else, it closes the current bin, opens a new bin and packs the item in it.

\subsection{Scheduling on $m$ Identical Machines.}  
An instance of the {\em online scheduling problem on $m$ identical machines} consists of $m$ identical machines and a request sequence $\sigma$.  Each $r_i \in \sigma$ is a job with a processing time $v(r_i) > 0$.  
An assignment of the jobs to the $m$ machines is called a {\em schedule}.  For a schedule $S$, $L_i(S) = \sum_{r_j \in M_i} v(r_j)$ denotes the {\em load} of machine $i$ in $S$, where $M_i$ is the set of jobs assigned to machine $i$ in $S$.  In this paper, we focus on the following objective functions:
\begin{itemize}
\item {\em Minimizing the makespan:} minimizing the maximum load
  over all the machines;
\item {\em Machine cover:} maximizing the
  minimum load; 
\item {\em $\ell_p$ norm:} minimizing the $\ell_p$ norm,
  $1 < p \le \infty$, of the load of all the machines.  For a schedule
  $S$, the $\ell_p$ norm is defined to be $\|L(S)\|_p =
  \left(\sum_{i=1}^m(L_i(S))^p\right)^{1/p}$.  Note that minimizing
  the $\ell_\infty$ norm is equivalent to minimizing the makespan.
\end{itemize}

\section{Online Algorithms with Advice for Bin Packing}\label{sec:bin}
Presented in this section is an algorithm for the online bin packing problem called $\BPA$. The algorithm $\BPA$ is inspired by the APTAS algorithms for offline bin packing problem and is $(1 + \eps)$-competitive, using $\OO\left(\frac{1}{\eps}\log \frac{1}{\eps} \right)$ bits of advice per request. 

The advice for the algorithm $\BPA$ is based on a $(1+2\eps)$-competitive packing of the request sequence, denoted by $S$. The packing of $S$ is based on the APTAS of Fernandez de la Vega and Lueker~\cite{VegaL1981} but $S$ is created in an online manner so that $\BPA$ can produce the same schedule. All the items with size at least $\eps$ are grouped, based on their size, into $1/\eps^2$ groups. The groups are numbered sequentially and each item is assigned an \emph{item type} that corresponds to its group number. The packing of the these items uses $(1+\eps)N$ bins and items smaller than $\eps$ can be packed using no more than an additional $\eps N$ bins. The advice indicates the item type and the packing of the bin in which the item is packed. The packing of the bin is described by the types of the items in the bin of $S$. This allows $\BPA$ to reproduce the packing of $S$.

For the $(1 + \eps)$-competitive algorithm, the advice is defined based on an optimal packing of the request sequence. That is, the offline oracle must solve an NP-hard problem. This is possible in this model as no computational restrictions are placed on the oracle. However, it should be noted that the algorithm presented here creates a packing that is $(1 + \eps)$-competitive with respect to some packing $\Sstar$ which does not necessarily have to be an optimal packing. If the computational power of the oracle were restricted, a $(1+\eps)$-competitive (asymptotic) algorithm could be achieved by defining the advice based on the $(1+\eps')$-approximate packing $\Sstar$ created via a bin packing APTAS, e.g.\ the scheme of Fernandez de la Vega and Lueker~\cite{VegaL1981}, (albeit requiring slightly more bits of advice as $\eps$ would have to be adjusted according to $\eps'$). 

The main result of this section is the following.
\begin{theorem}\label{thm:bpaCR}
Given  $\eps$, $0<\eps \le 1/2$, the competitive ratio for $\BPA$ is at most $1 + 3\eps$, and $\BPA$ uses at most $\bpAdvUp$ bits of advice per request.
\end{theorem}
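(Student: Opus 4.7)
The plan is to define a reference packing $\Sstar$ of the entire request sequence using a de la Vega--Lueker style rounding, transmit enough information about $\Sstar$ in the per-request advice so that $\BPA$ can reproduce $\Sstar$ online, and bound $|\Sstar|$ against the optimum $N$.

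First I would split items into ``large'' (size $\ge \eps$) and ``small'' (size $< \eps$). Following Fernandez de la Vega and Lueker, I sort the large items non-increasingly by size, partition them into $\epsSqrCeilFrac$ consecutive groups of roughly equal cardinality, and round every item up to the largest size in its group. This leaves at most $\epsSqrCeilFrac$ distinct large \emph{types}. The classical shift argument then shows that the rounded large items admit a packing using at most $(1+\eps)N$ bins: the items of rounded group $i{+}1$ can be placed into the bin slots used by original group $i$ of an optimal packing, so that only the first rounded group, of cardinality at most $\eps^{2} n_L \le \eps N$ (where $n_L$ is the number of large items), is lost. Each bin is described by its \emph{pattern}, the multiset of types it contains, which has at most $\epsCeilFrac$ entries since each large item has size at least $\eps$. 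I would then fold in the small items by next-fit, first into residual capacity and then into fresh bins; because a new small-item bin is opened only when every previously open bin is filled past level $1-\eps$, the total bin count grows to at most $N/(1-\eps) + O(1) \le (1 + 2\eps)N + O(1)$ for $\eps \le 1/2$, i.e.\ at most an additional $\eps N$ bins beyond the large-item layout. Overall $|\Sstar| \le (1+2\eps)N + O(1)$, well within the $(1+3\eps)$ target.

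For the advice, each request would carry (i) the type of its item (one of $\epsSqrCeilFrac + 1$ symbols including a ``small'' marker), and (ii) the pattern of the bin of $\Sstar$ that receives it, encoded as a sequence of up to $\epsCeilFrac$ type symbols over the same alphabet. Treating both fields as integers in their respective ranges gives a total of at most $\lceil (1/\eps) \log(\lceil 2/\eps^{2} \rceil + 1) \rceil + \lceil \log(\lceil 2/\eps^{2} \rceil + 1) \rceil$ bits per request, which simplifies to at most $\bpAdvUp$ after the ceilings are absorbed into the additive constant. $\BPA$ then simulates $\Sstar$ by maintaining, for each currently open pattern $p$, the counts of unfilled type-$t$ slots among its bins; an incoming item tagged $(t,p)$ is placed in such a slot if one exists, and otherwise opens a fresh bin of pattern $p$ and occupies its first type-$t$ slot. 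Items marked ``small'' are routed to a parallel next-fit pool. Since items sharing a type have identical rounded sizes (and so are interchangeable at this granularity), this online procedure opens exactly the bins of $\Sstar$, giving the claimed competitive ratio.

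The main obstacle I expect is calibrating the number of groups: too few types drives the shift-argument loss above $\eps N$, while too many blows up the pattern alphabet and overspends the per-request advice budget; showing that both sides meet at $\epsSqrCeilFrac$ groups and $\epsCeilFrac$ slots per pattern is the crux. A secondary subtlety is ensuring the small-item next-fit pool does not conflict with the large-item bins during the online reconstruction, which I would handle by keeping those bin families disjoint; the loss is already charged to the $\eps N$ small-item surplus computed above.
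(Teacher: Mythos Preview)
Your offline analysis is essentially the paper's: group large items into $1/\eps^2$ types, pack type~$1$ alone ($\le \eps N$ bins), pack types $2,\dots,1/\eps^2$ in $\le N$ bins via the shift argument, then feed small items by \textsc{next fit} into the residual capacity of these bins. The $(1+2\eps)N+1$ bound in your first paragraph is correct precisely because small items share bins with large items: once a small item overflows, \emph{every} earlier bin is filled past $1-\eps$.

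The gap is in your online reconstruction. You propose to route small items to a ``parallel next-fit pool'' kept \emph{disjoint} from the large-item bins, and assert that the loss is already charged to the $\eps N$ surplus. It is not. Consider $N$ optimal bins each containing one large item of size $\eps+\delta$ together with small items filling to~$1$. If $B'_2$ is the shift-argument packing (one large item per bin, as the $\le N$ guarantee allows), the large-item bins number roughly $N$ and are only $\eps$-full; the small items have total size $\approx(1-\eps)N$ and, packed separately by \textsc{next fit}, need another $\approx N$ bins. You end up with $\approx 2N$ bins, not $(1+3\eps)N$. The $(1+2\eps)N$ argument collapses the moment small items stop seeing the residual capacity of the large-item bins.

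The paper resolves this by \emph{not} separating the pools. Each large item carries one extra advice bit saying whether its bin in $\Salg$ also contains small items (splitting bins into lists $L_1$ and $L_2$); each small item carries a bit saying whether to advance the current small-item pointer within $L_1$. Bin patterns are sent with the first $N$ requests and queued, rather than attached to each item's destination. With this coordination the online algorithm reproduces $\Salg$ exactly, and the $(1+2\eps)N+1$ bound carries over.

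A second, smaller omission: $(1+2\eps)N+1$ is only asymptotically $(1+2\eps)$-competitive. To get the strict $(1+3\eps)$ ratio the paper spends one more advice bit $w$ to branch on whether $N>1/\eps$: if so, the extra ``$+1$'' is at most $\eps N$; if not, the advice simply names the optimal bin index in $\lceil\log(1/\eps)\rceil$ bits and the packing is exact. Your write-up does not account for this case split, and the two coordination bits plus $w$ are exactly what make up the ``$+3$'' in the advice bound.
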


Initially, we will present an algorithm, $\BPAA$, that uses less than $\bpAdvUp$ bits of advice per request and is asymptotically (in the number of optimal bins) $(1+2\eps)$-competitive. Then, with a small modification to $\BPAA$, we will present $\BPA$, an algorithm that is $(1+3\eps)$-competitive for any number of optimal bins and uses $1$ more bit per request than $\BPAA$. That is, regardless of the optimal cost, $\BPA$ always has a cost that is at most $(1+3\eps)$ times the number of optimal bins.

\subsection{Asymptotic $(1+2\eps)$-Competitive Algorithm.}

We begin by creating a rounded input $\sigma'$ based on $\sigma$ using the scheme of Fernandez de la Vega and Lueker~\cite{VegaL1981}. That is, we will group items based on their size into a finite number of groups and round the size of all the items of each group up to the size of the largest item in the group (see Figure~\ref{fig:bpRound}).  

An item is called \emph{large} if it has size larger than $\eps$. Items with size at most $\eps$ are called \emph{small} items. 
Let the number of large items in $\sigma$ be $L$. Sort the large items of $\sigma$ in order of nonincreasing size. Let $h=\lceil\eps^2L\rceil$. For $i=0,\dots,1/\eps^2-1$, assign the large items $ih+1,\dots,ih+\lceil\eps^2L\rceil$ to group $i+1$. A large item of \emph{type} $i$ denotes a large item assigned to group $i$. The last group may contain less than $\lceil\eps^2L\rceil$ items. For each item in group $i$, $i=1,\dots,1/\eps^2$, round up its size to the size of the largest element in the group. 

\begin{figure}[htb]
\centering
\begin{tikzpicture}[thick]

\draw[black] (0, 0.2) -- ++(0,2);
\draw[fill] (0.902132177725434, 0.4) circle (2pt);
\draw[fill] (1.45366716664284, 1.2) circle (2pt);
\draw[fill] (2.2599522722885, 0.8) circle (2pt);
\draw[fill] (2.47207537293434, 1.6) circle (2pt);
\draw[black!60] (2.47207537293434, 0.2) -- ++(0,2);
\draw[fill] (2.5510832760483, 1.2) circle (2pt);
\draw[fill] (2.96437476761639, 2) circle (2pt);
\draw[fill] (3.23697155807167, 0.8) circle (2pt);
\draw[fill] (3.49513975903392, 1.6) circle (2pt);
\draw[fill] (3.82415786385536, 0.4) circle (2pt);
\draw[black!60] (3.82415786385536, 0.2) -- ++(0,2);
\draw[fill] (4.32760030496866, 0.4) circle (2pt);
\draw[fill] (4.41621185280383, 2) circle (2pt);
\draw[fill] (5.0322353374213, 1.6) circle (2pt);
\draw[fill] (5.26406505145133, 0.8) circle (2pt);
\draw[fill] (5.53856943268329, 1.2) circle (2pt);
\draw[black!60] (5.53856943268329, 0.2) -- ++(0,2);
\draw[fill] (5.7159958826378, 0.8) circle (2pt);
\draw[fill] (6.12248234916478, 2) circle (2pt);
\draw[fill] (7.18547832686454, 1.2) circle (2pt);
\draw[fill] (7.87348593790084, 1.6) circle (2pt);
\draw[fill] (7.99700583759695, 0.4) circle (2pt);
\draw[black!60] (7.99700583759695, 0.2) -- ++(0,2);
\draw[fill] (8.31166283432394, 2) circle (2pt);
\draw[fill] (8.6076991211623, 1.6) circle (2pt);
\draw[fill] (8.71852759253234, 1.2) circle (2pt);
\draw[fill] (8.71912145968527, 0.4) circle (2pt);
\draw[fill] (9.76098943501711, 0.8) circle (2pt);
\draw[black!60] (9.76098943501711, 0.2) -- ++(0,2);
\draw[black] (10, 0.2) -- ++(0,2);

\node[left] at (-0.2, 2.5) {Type:};
\node at (1.23603768646717, 2.5) {$5$};
\node at (3.14811661839485, 2.5) {$4$};
\node at (4.68136364826933, 2.5) {$3$};
\node at (6.76778763514012, 2.5) {$2$};
\node at (8.87899763630703, 2.5) {$1$};
\node[left] at (-0.2,1.1) {$\sigma$:};
\node[left] at (-0.2,-0.05) {Size:};
\node at (0,-0.05) {$\eps \phantom{1}$};
\node at (10,-0.05) {$1$};
\node[left] at (-0.2,-1.25) {$\sigma'$:};

\draw[fill] (2.47207537293434, -2.1) circle (2pt);
\draw[fill] (2.47207537293434, -1.3) circle (2pt);
\draw[fill] (2.47207537293434, -1.7) circle (2pt);
\draw[fill] (2.47207537293434, -0.9) circle (2pt);
\draw[fill] (3.82415786385536, -1.3) circle (2pt);
\draw[fill] (3.82415786385536, -0.5) circle (2pt);
\draw[fill] (3.82415786385536, -1.7) circle (2pt);
\draw[fill] (3.82415786385536, -0.9) circle (2pt);
\draw[fill] (3.82415786385536, -2.1) circle (2pt);
\draw[fill] (5.53856943268329, -2.1) circle (2pt);
\draw[fill] (5.53856943268329, -0.5) circle (2pt);
\draw[fill] (5.53856943268329, -0.9) circle (2pt);
\draw[fill] (5.53856943268329, -1.7) circle (2pt);
\draw[fill] (5.53856943268329, -1.3) circle (2pt);
\draw[fill] (7.99700583759695, -1.7) circle (2pt);
\draw[fill] (7.99700583759695, -0.5) circle (2pt);
\draw[fill] (7.99700583759695, -1.3) circle (2pt);
\draw[fill] (7.99700583759695, -0.9) circle (2pt);
\draw[fill] (7.99700583759695, -2.1) circle (2pt);
\draw[fill] (9.76098943501711, -0.5) circle (2pt);
\draw[fill] (9.76098943501711, -0.9) circle (2pt);
\draw[fill] (9.76098943501711, -1.3) circle (2pt);
\draw[fill] (9.76098943501711, -2.1) circle (2pt);
\draw[fill] (9.76098943501711, -1.7) circle (2pt);
\draw[black] (0, -0.3) -- ++(0,-2);
\draw[black!60] (2.47207537293434, -0.3) -- ++(0,-2);
\draw[black!60] (3.82415786385536, -0.3) -- ++(0,-2);
\draw[black!60] (5.53856943268329, -0.3) -- ++(0,-2);
\draw[black!60] (7.99700583759695, -0.3) -- ++(0,-2);
\draw[black!60] (9.76098943501711, -0.3) -- ++(0,-2);
\draw[black] (10, -0.3) -- ++(0,-2);

\end{tikzpicture}
\caption[An example of grouping and rounding of large items for the $(1+\eps)$-competitive bin packing algorithm.]{An example of grouping and rounding of large items for $\eps = \sqrt{1/5}$. The top illustration shows the size of $24$ large items, denoted by the black dots, from $\sigma$ grouped into $5$ groups of $5$ items (except for the last group that contains $4$ items), according to a sorting of the items by size. The bottom illustration denotes the same items in the same grouping with their sizes rounded up as in $\sigma'$. Note that in the illustration the dots are placed at different heights to be able to clearly distinguish each point and has no other significance.}\label{fig:bpRound}
\end{figure}
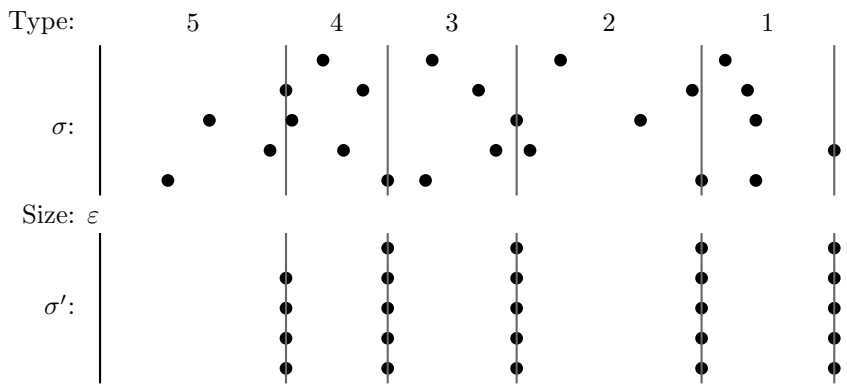

Let $\sigma'$ be the subsequence of $\sigma$ restricted to the large items with their sizes rounded up as per the scheme of Fernandez de la Vega and Lueker.
We now build a packing $\Sprime$. The type 1 items will be packed one item per bin. Let $B_1$ denote this set of bins. By definition, $|B_1| = \lceil\eps^2L\rceil$. Since large items have size at least $\eps$, $N \ge \eps L$. This implies the following fact.

\begin{fact}\label{fact:B1Size}
  $|B_1| \le \left\lceil\eps N\right\rceil$
\end{fact}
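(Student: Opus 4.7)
The plan is to combine the two ingredients that are already singled out in the text just before the statement. By construction, the items of type $1$ are exactly the first $\lceil \eps^2 L\rceil$ large items in the sorted order, and $\BPAA$'s packing $\Sprime$ places them one per bin, giving the equality $|B_1|=\lceil \eps^2 L\rceil$. It then remains to bound $\eps^2 L$ in terms of $\eps N$ by a volume argument.

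For the volume argument, I would observe that every large item has size strictly larger than $\eps$, so the total volume of the large items of $\sigma$ is strictly greater than $\eps L$. Since all items of $\sigma$ are packed into the $N$ bins of the optimal packing, each of capacity $1$, the total volume of the large items is at most $N$. Consequently $\eps L \le N$ (this is the inequality that the text records as $N \ge \eps L$). Multiplying both sides by $\eps$ yields $\eps^2 L \le \eps N$, and because the ceiling function is monotone, $|B_1| = \lceil \eps^2 L\rceil \le \lceil \eps N\rceil$, as required.

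There is no real obstacle in this argument; the whole proof is essentially one line after invoking the already-noted inequality $N \ge \eps L$. The only minor point worth mentioning explicitly is the monotonicity of the ceiling, since we want to keep the ceiling on the right-hand side of the final bound rather than replacing it by something looser such as $\eps N + 1$.
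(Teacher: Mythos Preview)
Your proposal is correct and follows exactly the paper's own reasoning: the text immediately preceding the fact already records $|B_1|=\lceil\eps^2 L\rceil$ and $N\ge \eps L$, and the fact is then obtained by multiplying the latter by $\eps$ and applying monotonicity of the ceiling. Your explicit mention of why the ceiling can be carried through is a nice touch, but otherwise there is nothing to add.
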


For the remaining \emph{large} items, i.e.\ types $2$ to $1/\eps^2$, in $\sigma'$, a packing, $B'_2$ that uses at most $N$ bins can be found efficiently~\cite{VegaL1981}. The packing of each bin $b_i \in B'_2$ can be described by a vector of length at most $1/\eps$, denoted $\mathbf{p_i}$, where each value in the vector ranges from $1$ to $1/\eps^2$ representing the type of each of the at most $1/\eps$ large items in $b_i$. This vector will be called a {\em bin pattern}. Let $B_2$ be a set of bins such that $|B_2| = |B'_2|$ and each $b_i \in B_2$ is assigned the bin pattern $b_i \in B'_2$. The items of $\sigma'$ can be assigned sequentially to the bins of $B_2$, using the following procedure. Initially, the bins of $B_2$ are all closed. For each $r_i \in \sigma'$, assign $r_i$ with type $t_i$ to the oldest open bin, $b_j$, such that there are less items of type $t_i$ packed in the bin than are described in $\mathbf{p_j}$. If no such bin exists, open a closed bin with a pattern that contains the type $t_i$ and pack $r_i$ in this bin. Note that such a bin must exist by the definition of $B_2$. 

The packing $\Sprime$ is defined to be $B_1 \cup B_2$ with the original (non-rounded up) sizes of the packed large items. The bins of $\Sprime$ are numbered from $1$ to $|\Sprime|$ based on the order that the bins would be opened when $\sigma'$ is processed sequentially. That is, for $i < j$ and every $b_i,b_j \in \Sprime$, there exists an $r_p \in b_i$ such that, for all $r_q \in b_j$, $p < q$. From Fact \ref{fact:B1Size} and $|B_2| \le N$, we have the following fact.

\begin{fact}\label{fact:SprimeSize}
  $|\Sprime| \le (1 + \eps)N + 1$
\end{fact}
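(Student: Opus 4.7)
The plan is to decompose $|\Sprime|$ into the contribution of the two components of its construction and handle each separately. Since $\Sprime = B_1 \cup B_2$ by definition, we have $|\Sprime| = |B_1| + |B_2|$. Fact~\ref{fact:B1Size} already gives $|B_1| \le \lceil \eps N \rceil \le \eps N + 1$, so the entire content of the claim reduces to proving $|B_2| \le N$ (which equals $|B'_2|$ by construction). Once that is done, the arithmetic $\lceil \eps N \rceil + N \le \eps N + 1 + N = (1+\eps)N + 1$ closes the argument.

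To prove $|B'_2| \le N$, I would invoke the standard linear-shifting argument of Fernandez de la Vega and Lueker, since this is the source cited in the text for the bound on $|B'_2|$. Let $\Sstar$ be an optimal packing of $\sigma$ into $N$ bins, and let $G_1, \dots, G_{1/\eps^2}$ denote the groups of large items used to construct $\sigma'$, sorted so that $G_1$ contains the $h = \lceil \eps^2 L \rceil$ largest items and so on. All groups have exactly $h$ items except possibly $G_{1/\eps^2}$, which has at most $h$, so in particular $|G_i| \le |G_{i-1}|$ for every $i \ge 2$. From $\Sstar$, discard all small items and all items of $G_1$. For each $i = 2, \dots, 1/\eps^2$, fix an arbitrary injection $\phi_i$ from $G_i$ (viewed as rounded items in $\sigma'$) into the remaining items of $G_{i-1}$ in $\Sstar$; this is possible by the cardinality comparison above.

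Now substitute: in the surviving packing, replace each $\phi_i(r)$ by the rounded item $r$ and delete every unmatched large item. Because groups partition the large items, no element of $\Sstar$ is ever targeted by two different $\phi_i$, so the substitution is well defined. Feasibility holds because items are sorted in non-increasing size across groups, so the rounded size of every item of $G_i$ (equal to the largest original size in $G_i$) is bounded by the smallest original size in $G_{i-1}$, hence by the original size of $\phi_i(r)$; each substitution can therefore only decrease the load of its bin. The resulting packing contains exactly the rounded large items of types $2, \dots, 1/\eps^2$ and uses at most $N$ bins, giving $|B'_2| \le N$.

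Combining the two bounds yields $|\Sprime| \le (\eps N + 1) + N = (1+\eps)N + 1$, as required. The main subtlety is making the substitution argument genuinely conflict-free and feasibility-preserving, i.e.\ checking both that the $\phi_i$ can be defined simultaneously (handled by the fact that the groups partition the large items) and that the monotonicity of sizes across consecutive groups actually dominates the rounding inflation (handled by the sort order). Both points are standard in the APTAS literature, so I would present them as a short verification rather than as a new construction.
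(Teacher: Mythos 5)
Your overall decomposition is exactly the paper's (unstated) argument: $|\Sprime| = |B_1| + |B_2|$, with $|B_1| \le \lceil \eps N\rceil \le \eps N + 1$ from Fact~\ref{fact:B1Size}, $|B_2| = |B'_2| \le N$, and then the arithmetic. The paper simply cites~\cite{VegaL1981} for $|B'_2|\le N$ and does not reprove the shifting lemma, so you went further than the paper requires.

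In the portion you add, there is a small but real inconsistency in the ordering of operations. You first say to ``discard all small items and all items of $G_1$'' from $\Sstar$, and then define $\phi_i : G_i \to G_{i-1}$ for $i = 2,\dots,1/\eps^2$. For $i=2$ this asks for an injection into the remaining items of $G_1$, but you have just discarded all of $G_1$, so $\phi_2$ cannot be defined. The items of $G_1$ in $\Sstar$ must be kept precisely because they serve as the anchor slots for the rounded items of $G_2$. The standard fix is to discard only the small items up front, run all the substitutions (replacing each $\phi_i(r)$ by the rounded $r$), and only then delete every large item of $\Sstar$ that was not targeted by any $\phi_i$ --- which in particular removes all of $G_{1/\eps^2}$ and any leftover items, but none of the $G_1$ anchors until they have been overwritten. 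With that one-line correction, the feasibility and disjointness arguments you give are correct, and the bound $|\Sprime|\le (1+\eps)N+1$ follows as you state.
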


We now extend $\Sprime$ to include the small items and define $\Salg$. Sequentially, by the order that the small items arrive, for each small item $r_i \in \sigma$, pack $r_i$ into $\Sprime$, using \textsc{next fit}. Additional bins are opened as necessary. The following lemma shows that $\Salg$ is a near-optimal packing. Note the this bound implies that $\Salg$ may pack one more bin than $(1 + 2\eps)$ times the optimal, making it an asymptotically $(1 + 2\eps)$-competitive packing. 

\begin{lemma}\label{lem:SalgSize}
  $|\Salg| \le (1 + 2\eps)N + 1$
\end{lemma}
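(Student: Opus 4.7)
The plan is to prove the bound by case analysis on whether the \textsc{next fit} phase opens any bins beyond those already in $\Sprime$. If it does not, then $|\Salg| = |\Sprime|$ and Fact~\ref{fact:SprimeSize} immediately yields $|\Salg| \le (1+\eps)N + 1 \le (1+2\eps)N + 1$, so we are done.

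The interesting case is when at least one new bin is opened. Here I would invoke the standard \textsc{next fit} accounting argument, suitably adapted to the present situation in which the initial ``current bin'' is bin~$1$ of $\Sprime$ (already containing large items) rather than an empty bin. The key observation is: whenever \textsc{next fit} advances past a bin $b$, it does so because the next small item does not fit in $b$; since that item has size at most $\eps$, the current content of $b$ must exceed $1-\eps$. Because we are in the case where a brand new bin is opened, \textsc{next fit} has advanced past every bin of $\Salg$ except possibly the one that is still current when the sequence ends. Hence at least $|\Salg|-1$ bins of $\Salg$ contain items of total size strictly greater than $1-\eps$. Summing over all bins gives $\sum_{r_i \in \sigma} s(r_i) > (|\Salg|-1)(1-\eps)$, and since $\OPT$ fits every item into $N$ unit bins we also have $\sum_{r_i \in \sigma} s(r_i) \le N$. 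Combining, $|\Salg| < \frac{N}{1-\eps}+1$. Using the hypothesis $\eps \le 1/2$, which gives $(1-\eps)(1+2\eps)=1+\eps-2\eps^2 \ge 1$ and hence $\frac{1}{1-\eps} \le 1+2\eps$, we conclude $|\Salg| < (1+2\eps)N+1$, and integrality of $|\Salg|$ yields the stated bound.

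The main subtlety, and the step that requires the most care, is justifying the claim that in the second case every bin of $\Salg$ other than the final current bin has content greater than $1-\eps$. Because the bins of $\Sprime$ come pre-loaded with large items, a bin of $\Sprime$ may be ``closed'' by \textsc{next fit} without ever receiving a small item; what matters is that \textsc{next fit} abandons it, and the triggering small item (of size at most $\eps$) then certifies the volume lower bound on that bin. Once this point is pinned down, the rest of the argument is a single volume inequality combined with the elementary estimate $1/(1-\eps)\le 1+2\eps$.
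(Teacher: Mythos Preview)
Your proposal is correct and follows essentially the same approach as the paper: the same case split on whether \textsc{next fit} opens a new bin, the same volume argument $(|\Salg|-1)(1-\eps)\le N$ in the second case, and the same estimate $\frac{1}{1-\eps}\le 1+2\eps$ for $\eps\le 1/2$. The paper's version is terser and does not spell out the subtlety about pre-loaded bins that you discuss, but the underlying reasoning is identical.
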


\begin{proof}
  After packing the small items, if no new bins are opened then the claim follows from Fact \ref{fact:SprimeSize}. If there are additional bins opened, all the bins of $\Salg$, except possibly the last one, are filled to at least $(1-\eps)$. Since the total size of the items is at most $N$, we have $(|\Salg|-1)(1-\eps) \le N$ and, therefore, $|\Salg| \le \frac{N}{1-\eps} + 1 \le (1 + 2\eps)N + 1$. 
\end{proof}

We now define $\BPAA$.  It is defined given a $\sigma$, and an $\eps$, $0 < \eps \le 1/2$.  $\BPAA$ uses two (initially empty) sets of bins $L_1$ and $L_2$.  $L_1$ is the set of bins that pack small items and 0 or more large items. $L_2$ is the set of bins that pack only large items.  $\BPAA$ and the advice will be defined such that the items are packed exactly as $\Salg$.

With the first $N$ items, the advice bits indicate a bin pattern. These $N$ bin patterns will be the patterns of the bins in order from $\Salg$. As the bin patterns are received, they will be queued. Also, with each item, the advice bits indicate the type of the item. Small items will be of type $-1$. If the item is large, the bits of advice will also indicate if it is packed in $\Salg$ in a bin that also includes small items or not.  

During the run of $\BPAA$, bins will be opened and assigned bin patterns. The bins in each of the sets of bins are ordered according to the order in which they are opened. When a new bin is opened, it is assigned an empty bin pattern if the current item is small. If the current item is of type 1, the bin is assigned a type 1 bin pattern. Otherwise, the current item is of type $2$ to $1/\eps^2$, and the next pattern from the queue of bin patterns is assigned to the bin. Note that, by the definition of $\Salg$, this pattern must contain an entry for an item of the current type.

For each $r_i \in \sigma$, the items are packed and bins are opened as follows:

\paragraph{Small Items} 
For packing the small items, $\BPA$ maintains a pointer into the set $L_1$ indicating 
the bin into which it is currently packing small items. Additionally, the advice for the small items includes
a bit (the details of this bit will be explained subsequently) to indicate if this pointer should be moved to the next bin in $L_1$. If this is the case,
the pointer is moved prior to packing the small item and, if there is no next bin in $L_1$, a new bin with an empty pattern
is opened and added to $L_1$.  Then, the small item is packed into the bin referenced by the pointer. 

\paragraph{Large Items}  $\BPA$ receives an additional bit $y$ as advice that indicates if $r_i$ is packed in a bin in $\Salg$ that also includes small items.

{\it Type $1$ items: } 
If the item $r_i$ is packed into a bin with small items ($y = 1$), $r_i$ is packed in the oldest bin with an empty pattern. If no such bin exists, then $r_i$ is packed into a new bin that is added to $L_1$. If $r_i$ is packed into a bin without small items ($y = 0$), then $r_i$ is packed into a new bin that is added to $L_2$. In all the cases, the bin into which $r_i$ is packed is assigned a type 1 bin pattern. 

{\it Type $i > 1$ items:} Let  $t_i$ be the type of $r_i$. If $r_i$ is packed with small items ($y = 1$), then $r_i$ is packed into the oldest bin of $L_1$ such that the bin pattern specifies more items of type $t_i$ than are currently packed.  If no such bin exists, then $r_i$ is packed in the first bin with an empty bin pattern and the next bin pattern from the queue is assigned to this bin. If there are no empty bins, a new bin is added to pack $r_i$. If $r_i$ is not packed with small items ($y = 0$), $r_i$ is packed analogously but into the bins of $L_2$.   

The advice bit used to move the pointer for packing small items (see Section~\ref{bpAdvDef} for a formal definition) is defined so that $\BPA$ will schedule the same number of small items on each bin as $\Salg$. Further, $\BPA$ schedules both the small and large jobs in the order the arrive on the least recently opened bin just as $\Salg$ (see Figure \ref{fig:bpABPA}) which implies the following fact.

\begin{figure}[htb]
\centering
\begin{tikzpicture}[thick]

\draw (0,2) -- (0,0) -- (1,0) -- (1,2);
\draw (1.25,2) -- (1.25,0) -- (2.25,0) -- (2.25,2);
\draw (2.5,2) -- (2.5,0) -- (3.5,0) -- (3.5,2);
\draw[loosely dotted] (3.75,1) -- (4.75,1);
\draw (5,2) -- (5,0) -- (6,0) -- (6,2);
\draw[decorate,decoration={brace,mirror}] (0,-0.1) to node [midway,below] {$B_{2+}$} (6,-0.1); 

\draw (6.5,2) -- (6.5,0) -- (7.5,0) -- (7.5,2);
\draw (7.75,2) -- (7.75,0) -- (8.75,0) -- (8.75,2);
\draw (9,2) -- (9,0) -- (10,0) -- (10,2);
\draw[decorate,decoration={brace,mirror}] (6.5,-0.1) to node [midway,below] {$B_1$} (10,-0.1); 

\draw (10.5,2) -- (10.5,0) -- (11.5,0) -- (11.5,2);
\draw[decorate,decoration={brace,mirror}] (10.5,-0.1) to node [midway,below] {$B_{\text{small}}$} (11.5,-0.1); 

\draw[decorate,decoration={brace,mirror}] (0,-0.75) to node [midway,below] {$L_1 \cup L_2 \equiv \Salg$} (11.5,-0.75); 

\end{tikzpicture}
\caption[An illustration of the packing produced by $\BPAA$.]
{An illustration of the packing produced by $\BPAA$, $L_1 \cup L_2$, that is equivalent to the packing $\Salg$. $B_{2+}$ packs items of type $2$ to $1/\eps^2$ into $N$ bins. $B_1$ represents the set of $\eps N$ bins dedicated to packing type $1$ items and $B_{\text{small}}$ represents the (possibly empty) set of at most $\eps N + 1$ bins dedicated to packing the overflow of small items from the \textsc{next fit} packing of the small items into the bins of $B_1 \cup B_{2+}$.}\label{fig:bpABPA}
\end{figure}
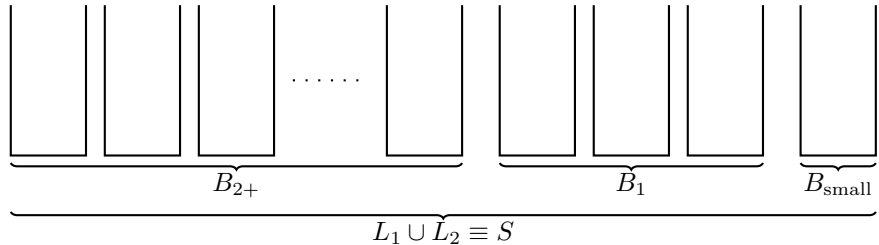

\begin{fact}\label{fact:L1L2isSOFF}
$L_1 \cup L_2$ is the same packing as $\Salg$. 
\end{fact}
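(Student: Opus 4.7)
The plan is to establish the fact by induction on the length $i$ of the processed prefix of $\sigma$, maintaining the invariant that, after $\BPAA$ processes $r_1,\ldots,r_i$, there is a bijection $\phi_i$ between the currently nonempty bins of $L_1 \cup L_2$ and the bins of $\Salg$ (restricted to items from the prefix) under which paired bins contain exactly the same multiset of items and carry exactly the same bin pattern. The bijection $\phi_i$ is defined so that the $k$-th bin opened inside $L_1$ (respectively $L_2$) is paired with the $k$-th bin of $\Salg$ that eventually contains at least one small item (respectively no small item). Since $y$ tells $\BPAA$ exactly which side of this partition each large item belongs to, and the pointer-advance bit (forward-referenced in Section~\ref{bpAdvDef}) is designed to match $\Salg$'s NEXT~FIT moves on small items, this bijection is well-defined.

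The inductive step is a case analysis on the arriving item $r_i$. For a type-$1$ item, $\Salg$ opens a dedicated bin whose pattern is simply ``type~$1$''; $\BPAA$ does the same and places it in $L_1$ or $L_2$ according to $y$, extending $\phi_{i-1}$ by pairing the two new bins. For a type-$t$ item with $t>1$, I would show that $\BPAA$'s ``oldest bin whose pattern still has room for type~$t$'' rule, restricted to the appropriate side, picks the image under $\phi_{i-1}$ of the bin $\Salg$ would use to pack $r_i$; and, if no such bin exists, that consuming the head of the pattern queue and assigning it to the oldest empty-pattern bin in the same side reproduces what $\Salg$ does when it opens a fresh $B_2$-bin for $r_i$ during its processing of $\sigma'$. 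For a small item, I would use that both algorithms traverse the small-item-receiving bins in the same order (in $\Salg$ via NEXT~FIT over $\Sprime$, in $\BPAA$ via the $L_1$ pointer) and that the advance bit fires on exactly those steps where NEXT~FIT would close the current bin and move on; thus the small item goes into the $\phi_{i-1}$-paired bin in both packings, and if a new bin is opened in $\Salg$ at the end of $\Sprime$, $\BPAA$ likewise creates one in $L_1$ with an empty pattern.

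The main obstacle will be the interleaving of small and large items in $\sigma$ versus the two-phase construction of $\Salg$ (all large items via patterns, then small items via NEXT~FIT). In particular, $\BPAA$ may open an empty-pattern bin for a small item before the type-$>1$ item that $\Salg$ uses to justify that bin's existence has arrived, and conversely may need to assign a pattern to a bin that in $\Salg$ will only later acquire small items. The key to handling this is the following invariant, which I would prove together with the bijection: the FIFO queue of bin patterns held by $\BPAA$ is, at every time, exactly the suffix of $B'_2$'s pattern list that $\Salg$ has not yet ``used'' in its sequential large-item phase. Combined with the ``oldest empty-pattern bin'' tiebreak, this ensures that each pop from the queue lands on the bin corresponding under $\phi_{i}$ to the next $B_2$-bin $\Salg$ opens. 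Once the induction closes, taking $i$ equal to the length of $\sigma$ yields $L_1 \cup L_2 = \Salg$.
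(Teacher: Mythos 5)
The paper gives this fact no proof beyond the sentence preceding it, which asserts only that the advance bit reproduces $\Salg$'s per-bin small-item counts and that both $\BPAA$ and $\Salg$ place each item on the least recently opened eligible bin. Your induction with the bijection $\phi_i$ is a sound and genuinely more rigorous formalization of that assertion. The bijection you choose (the $k$-th bin of $L_1$ paired with the $k$-th bin of $\Salg$ that eventually receives a small item, analogously for $L_2$) is the right invariant, and the FIFO-queue suffix invariant is exactly the bookkeeping needed to bridge the two-phase offline construction of $\Salg$ (all of $\sigma'$ first, then next fit on small items) with the single interleaved pass of $\BPAA$. One step your outline takes for granted and that deserves an explicit argument: $\phi_i$ is consistent with the order in which $\BPAA$ opens bins only because the bins of $\Sprime$ are numbered by the arrival time of their first large item \emph{and} the next-fit pass over the small items visits bins in that same numerical order; together these imply that for any two bins $b_j$ and $b_{j'}$ of $\Salg$ with $j < j'$ that both eventually receive small items, the earliest item of $\sigma$ destined for $b_j$ precedes the earliest item destined for $b_{j'}$. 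This monotonicity is what keeps the ``oldest empty-pattern bin'' tie-break and the pattern queue synchronized throughout the induction, and also what guarantees that the $L_1$ versus $L_2$ classification forced by the $y$ bits never contradicts the pairing $\phi_i$ built from future information.
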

Therefore, $|L_1 \cup L_2| \le (1 + 2\eps)N + 1$ by Lemma \ref{lem:SalgSize}.

\subsubsection{Formal Advice Definition.}\label{bpAdvDef}

\paragraph{Bin Patterns}
Instead of sending the entire vector representing a bin pattern, we enumerate all the possible vectors and the advice will be the index of the vector from the enumeration encoded in binary. The bin pattern vectors have a length of at most $1/\eps$ and there are at most $1/\eps^2$ different possible values.  
To ensure that the all vectors have the same length, a new value $\bpNull$ is used to pad vectors to a length of $1/\eps$. The increase the number of possible values per entry to $1/\eps^2 + 1$.  

The algorithm requires less than $\left\lceil\frac{1}{\eps}\log\left(\frac{1}{\eps^2} + 1\right)\right\rceil < \frac{1}{\eps}\log\left(\frac{2}{\eps^2}\right) + 1$
bits of advice per request to encode the index of the bin pattern from an enumeration of all possible bin patterns in binary.

\paragraph{Advice per Request}
In order to define the advice, for each bin $b_i \in \Salg$, we define a value $\kappa_i$ that is the number of small items packed in $b_i$.

Per request, the advice string will be $xyz$, where $x$ is $\left\lceil\log \left(1/\eps^2 + 1\right)\right\rceil < \log \left( 2/\eps^2 \right) + 1$ bits in length to indicate the type of the item; $y$ is $1$ bit in length to indicate whether the large items are packed with small items, or to indicate to small items whether or not to move to a new bin; $z$ has a length less than $\frac{1}{\eps}\log\left(\frac{2}{\eps^2}\right) + 1$ to indicate a bin pattern. $xyz$ is defined as follows for request $r_i$:

\sAdvice
{\bf $x$:}
&  & The type of $r_i$ encoded in binary.\\
{\bf $y$:}
& $r_i$ is a small item: & Let $s$ be the number of small jobs in $\left<r_1,\ldots,r_{i-1}\right>$.  If there exists an integer $1 \le j \le N$ such that $\sum_{k=1}^j\kappa_k = s$, then the first bit is a $1$.  Otherwise, the first bit is a $0$. \\
& $r_i$ is a large item: & 1, if $\kappa_i > 0$, where $b_i$ is the bin in which $r_i$ is packed in $\Salg$, i.e.\ $b_i$ packs small items. Otherwise, 0.\\
{\bf $z$:}
& $i \le N$ & The bits of $z$ encode a number in binary indicating the vector representing the bin pattern of the $i$-th bin opened by $\Sprime$. \\
& $i > N$ & Not used. All zeros. \\
\fAdvice

\subsection{Strict $(1+3\eps)$-Competitive Algorithm.}

$\BPA$ is defined such that it will behave in two different manners, depending on $N$ (i.e.\ the number of bins in an optimal packing) and $\eps$. One bit of advice per request, denoted by $w$, is used to distinguish between the two cases. The two cases are as follows.

\paragraph{Case 1: $N > 1/\eps$ ($w = 0$)} $\BPA$ will run $\BPAA$ as described previously. The only difference is that the advice per request for $\BPAA$ is prepended with an additional bit for $w$. Since $N > 1/\eps$, a single bin is at most $\eps N$ bins. Therefore, we get the following corollary to Lemma \ref{lem:SalgSize}.

\begin{corollary}\label{cor:SalgSize}
  $|\Salg| \le (1 + 3\eps)N$
\end{corollary}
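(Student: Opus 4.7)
The plan is to derive this directly from Lemma~\ref{lem:SalgSize}, using only the case hypothesis $N > 1/\eps$ to absorb the additive constant $+1$ into a small multiplicative slack. Specifically, Lemma~\ref{lem:SalgSize} already gives $|\Salg| \le (1+2\eps)N + 1$, so all that remains is to bound the stray $+1$ by $\eps N$.

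The key observation is that $N > 1/\eps$ is equivalent to $\eps N > 1$, hence $1 \le \eps N$. Substituting this into the bound from Lemma~\ref{lem:SalgSize} yields
\[
|\Salg| \;\le\; (1+2\eps)N + 1 \;\le\; (1+2\eps)N + \eps N \;=\; (1+3\eps)N,
\]
which is exactly the claim of the corollary. Note that the recovery of a strict (rather than merely asymptotic) competitive bound comes entirely from the fact that, once the optimum uses more than $1/\eps$ bins, a single extra bin is itself within an $\eps$-fraction of $N$, so the additive slack inherent in the asymptotic analysis can be folded into the multiplicative ratio.

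There is essentially no obstacle: the argument is a one-line arithmetic manipulation, and correctness of the construction (in particular Fact~\ref{fact:L1L2isSOFF} and Lemma~\ref{lem:SalgSize}) has already been established earlier in the section. The only thing worth flagging is that $w=0$ merely signals to $\BPA$ that it should execute $\BPAA$ unchanged, so the packing produced in Case~1 really is the packing analyzed by Lemma~\ref{lem:SalgSize}; the corollary then follows immediately.
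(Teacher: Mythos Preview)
Your argument is correct and is exactly the paper's own reasoning: the text preceding the corollary states that since $N > 1/\eps$, a single bin is at most $\eps N$ bins, and the corollary then follows immediately from Lemma~\ref{lem:SalgSize} by replacing the additive $+1$ with $\eps N$.
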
  

\paragraph{Case 2: $N \le 1/\eps$ ($w = 1$)} In this case, for each $r_i \in \sigma$, after $w$, the next $\left\lceil\log(1/\eps)\right\rceil$ bits of advice per request define the bin number in which $r_i$ is packed in an optimal packing. $\BPA$ will pack $r_i$ into the bin as specified by the advice. This case requires less than $\log(1/\eps) + 2 < \bpAdvUp$ (the upper bound on the amount of advice used per request in case 1) bits of advice per request and the packing produced is optimal.

The definition of the algorithm and the advice, Fact~\ref{fact:L1L2isSOFF} and Corollary~\ref{cor:SalgSize} prove Theorem~\ref{thm:bpaCR}. 

\section{Online Algorithms with Advice for Scheduling}\label{sec:sch}

In this section, we present a general framework for the online scheduling problem on $m$ identical machines. This framework depends on a positive $\eps < 1/2$, $U > 0$, and the existence of an optimal schedule $\Sstar$, where all jobs with a processing time greater than $U$ are scheduled on a machine without any other jobs. The framework will produce a schedule $\Salg$ such that, up to a permutation of the machines of $\Salg$, the load of machine $i$ in $\Salg$ is within $\eps L_i(\Sstar)$ of the load of machine $i$ in $\Sstar$, where $L_i(S)$ denotes the load of machine $i$ in the schedule $S$. This is done using $\schdBigOAdv$ bits of advice per request. We show that this nearly optimal schedule is sufficient for $(1+\eps)$-competitive algorithms for the makespan and minimizing the $\ell_p$ norm objectives, and a $\left(1/(1-\eps)\right)$-competitive algorithm for the machine cover objective. 

As is the case with the $(1+\eps)$-competitive algorithm for bin packing problem presented in Section~\ref{sec:bin}, the algorithms with a competitive ratio of $(1+\eps)$ (resp. $(1/(1-\eps))$) presented in this section could use advice that was based on a schedule produced by a PTAS for the desired objective function as opposed to an optimal schedule.

If $\log m$ is less than the number of bits of advice per request to be given to the algorithm, then the trivial algorithm with advice that encodes, for each job, the machine number to schedule that job could be used to obtain a $1$-competitive algorithm instead of the framework presented in this section.

\subsection{General Framework}
The machines are numbered from $1$ to $m$. Given an $\eps$, $0< \eps < 1/2$, and $U > 0$, the requested jobs will be classified into a constant number of {\em types}, using a geometric classification. $U$ is a bound which depends on the objective function of the schedule.  Formally, a job is of type $i$ if its processing time is in the interval $(\eps(1+\eps)^i U,\eps(1+\eps)^{i+1}U]$ for $i\in[0,\lceil\log_{1+\eps}\frac1\eps\rceil)$. These jobs will be called {\em large jobs} (see Figure \ref{fig:schedGroup}). Jobs with processing times at most $\eps U$ will be considered {\em small jobs} and have a type of $-1$.  Jobs with processing times greater than $U$ will be considered {\em huge jobs} and have a type of $\msTypes$. The online algorithm does not need to know the actual value of the threshold $U$.

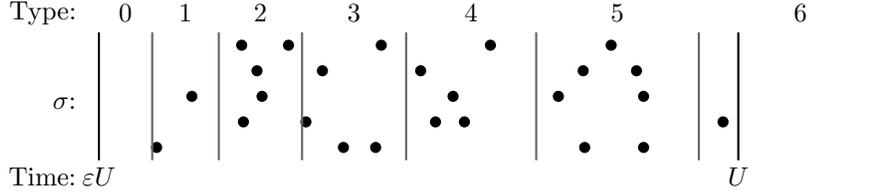
\begin{figure}[htb]
\centering
\begin{tikzpicture}[thick,scale=0.85]
\draw[black] (0, 0.2) -- ++(0,2);
\draw[fill] (0.902132177725434, 0.4) circle (2pt);
\draw[fill] (1.45366716664284, 1.2) circle (2pt);
\draw[fill] (2.23257064819336, 2) circle (2pt);
\draw[fill] (2.2599522722885, 0.8) circle (2pt);
\draw[fill] (2.47207537293434, 1.6) circle (2pt);
\draw[fill] (2.5510832760483, 1.2) circle (2pt);
\draw[fill] (2.96437476761639, 2) circle (2pt);
\draw[fill] (3.23697155807167, 0.8) circle (2pt);
\draw[fill] (3.49513975903392, 1.6) circle (2pt);
\draw[fill] (3.82415786385536, 0.4) circle (2pt);
\draw[fill] (4.32760030496866, 0.4) circle (2pt);
\draw[fill] (4.41621185280383, 2) circle (2pt);
\draw[fill] (5.0322353374213, 1.6) circle (2pt);
\draw[fill] (5.26406505145133, 0.8) circle (2pt);
\draw[fill] (5.53856943268329, 1.2) circle (2pt);
\draw[fill] (5.7159958826378, 0.8) circle (2pt);
\draw[fill] (6.12248234916478, 2) circle (2pt);
\draw[fill] (7.18547832686454, 1.2) circle (2pt);
\draw[fill] (7.57348593790084, 1.6) circle (2pt);
\draw[fill] (7.59700583759695, 0.4) circle (2pt);
\draw[fill] (8.01166283432394, 2) circle (2pt);
\draw[fill] (8.4076991211623, 1.6) circle (2pt);
\draw[fill] (8.51852759253234, 1.2) circle (2pt);
\draw[fill] (8.51912145968527, 0.4) circle (2pt);
\draw[fill] (9.76098943501711, 0.8) circle (2pt);
\draw[black] (10, 0.2) -- ++(0,2);

\draw[black!60] (0.8333333333, 0.2) -- ++(0,2);
\draw[black!60] (1.875000000, 0.2) -- ++(0,2);
\draw[black!60] (3.177083333, 0.2) -- ++(0,2);
\draw[black!60] (4.804687500, 0.2) -- ++(0,2);
\draw[black!60] (6.839192708, 0.2) -- ++(0,2);
\draw[black!60] (9.382324219, 0.2) -- ++(0,2);
\draw[black!60] (12.56123861, 0.2) -- ++(0,2);

\node[left] at (-0.2, 2.5) {Type:};
\node at (0.4166666667, 2.5) {$0$};
\node at (1.354166667, 2.5) {$1$};
\node at (2.526041667, 2.5) {$2$};
\node at (3.990885417, 2.5) {$3$};
\node at (5.821940104, 2.5) {$4$};
\node at (8.110758464, 2.5) {$5$};
\node at (10.97178141, 2.5) {$6$};
\node[left] at (-0.2,1.1) {$\sigma$:};
\node[left] at (-0.2,-0.05) {Time:};
\node at (0,-0.05) {$\eps U$};
\node at (10,-0.05) {$U$};
\end{tikzpicture}
\caption[An example of grouping of the large jobs for the $(1+\eps)$-competitive scheduling framework.]{An example of the geometric grouping of large jobs, jobs with a processing time in the range $(\eps U, U]$, for $\eps = 1/4$. The illustration shows the processing time of the $25$ large jobs, denoted by the black dots, from $\sigma$ grouped into $7$ groups, according to a sorting of the jobs by processing time. The $i$-th group consists of jobs with a processing time in the interval $(\eps(1+\eps)^i U,\eps(1+\eps)^{i+1}U]$. 
Even though the range for type $6$ is greater than $U$, only jobs with a processing time at most $U$ will be assigned type $6$.}\label{fig:schedGroup}
\end{figure}

Let $\Sstar$ be an optimal schedule for the input at hand. In what follows, we will define a schedule $\Sprime$ from $\Sstar$ such that, for all $i$, $L_i(\Sprime) \in [L_i(\Sstar) - \eps U, L_i(\Sstar) + \eps U]$. Then, based on $\Sprime$, we will define a schedule $\Salg$ such that $L_i(\Salg) \in [(1-\eps)L_i(\Sstar) - \eps U, (1+\eps)L_i(\Sstar) + \eps U]$. The advice will be defined so that the online algorithm will produce the schedule $\Salg$.

\medskip 
 
The framework makes the following assumption.  For each of the objective functions that we consider, we will show that
there always exists an optimal schedule for which this assumption holds. 
\begin{assumption} \label{ass:hugeJob}
$\Sstar$, the optimal schedule on which the framework is based, is a schedule such that each huge job is scheduled on a machine that does not contain any other job.
\end{assumption}  

The general framework is defined given a $\sigma$, an $\eps$, a $U$, and an $\Sstar$ under Assumption~\ref{ass:hugeJob}. For the schedule $\Sstar$, we assume without loss of generality that machines are numbered from $1$ to $m$, according to the following order. Assign to each machine the request index of the first large job scheduled on it. Order the machines by increasing order of this number. Machines on which no large job is scheduled are placed at the end in an arbitrary order. 

We define $\Sprime$ by removing the small jobs from $\Sstar$.
$\Sprime$ can be described by $m$ patterns, one for each machine.  Each such pattern will be called a {\em machine pattern}.  For machine $i$, $1 \le i \le m$, the machine pattern 
indicates that (1) the machine schedules large or huge jobs, or (2) an empty machine (such a machine may schedule only small jobs).
In the first case, the machine pattern is a vector with one entry per large or huge job scheduled on machine $i$ in $\Sprime$.  These entries will be the job types of these jobs on machine $i$ ordered from smallest to largest.  Let $v$ denote the maximum length of the machine pattern vectors for $\Sprime$.  The value of $v$ will be dependent on the objective function and $U$.  
We later show that for all the objective functions we consider, $v \leq1/\eps + 1$.

We now extend $\Sprime$ to also include the small jobs. Figure~\ref{fig:smallJobs} depicts the proof of the following lemma.

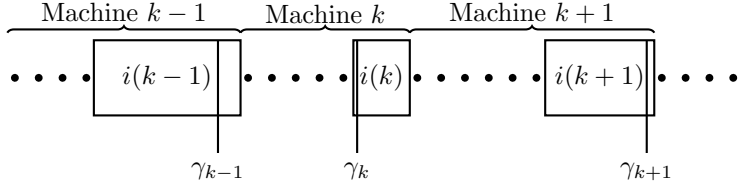
\begin{figure}[htb]
\centering
\begin{tikzpicture}[thick, dot/.style={line cap=round,dash pattern=on 0pt off 3\pgflinewidth, line width=3pt}]

  \draw[decorate,decoration={brace}] (-0.1,1.6) to node[midway,above] {Machine $k-1$} (3,1.6);
  \draw[dot] (0,1) -- (1,1);
  \draw (1.05,0.5) rectangle (3,1.5) node[midway] {$i(k-1)$};
  \draw (2.7, 0) node[below] {$\gamma_{k-1}$} to (2.7, 1.5);  \draw[decorate,decoration={brace}] (3,1.6) to node[midway,above] {Machine $k$} (5.25,1.6);
  \draw[dot] (3.1,1) -- (4.4,1); 
  \draw (4.5,0.5) rectangle (5.25,1.5) node[midway] {$i(k)$};
  \draw (4.55, 0) node[below] {$\gamma_{k}$} to (4.55, 1.5);
  \draw[decorate,decoration={brace}] (5.25,1.6) to node[midway,above] {Machine $k+1$} (8.5,1.6);
  \draw[dot] (5.35,1) -- (7,1); 
  \draw (7.05,0.5) rectangle (8.5,1.5) node[midway] {$i(k+1)$};
  \draw (8.4, 0) node[below] {$\gamma_{k+1}$} to (8.4, 1.5);
  \draw[dot] (8.6,1) -- (9.6,1); 

\end{tikzpicture}
\caption[An illustration of the proof Lemma~\ref{lem:all}.]{An illustration of the proof of Lemma~\ref{lem:all}. In the illustration, the rectangles represent the processing time of the small jobs as they are ordered in $\sigma$. The value $\gamma_i$ indicates the total processing time of the small jobs scheduled on the first $i$ machines in an optimal schedule. The difference between $\gamma_i$ and $\gamma_{i-1}$ ($\gamma_0 = 0$) being the processing time of the smalls jobs on machine $i$ which is denoted by $y_i$ in Lemma~\ref{lem:all}. The small jobs are assigned in a next fit manner to the machines of $\Sprime$ such that the small jobs scheduled on machine $k$ include all the small jobs from the first small job immediately after the last small job scheduled on machine $k-1$ ($i(k-1)$) to the small job ($i(k)$) such that the total processing time of all the small jobs prior to and including $i(k)$ is at least $\gamma_k$. This ensures that the total processing time of the small jobs assigned to machine $k$ is within $\eps U$ of the processing time of the small jobs on machine $k$ in the optimal schedule.}  \label{fig:smallJobs}
\end{figure}

\begin{lemma}
\label{lem:all}
The small jobs of $\sigma$ can be scheduled on the machines of $\Sprime$ sequentially in a next fit manner from machine $1$ to machine $m$, such that the load for each machine $i$ will be in $[L_i(\Sstar) - \eps U, L_i(\Sstar) + \eps U]$.
\end{lemma}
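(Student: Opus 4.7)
The plan is to define checkpoints that describe how the small-job load is distributed in $\Sstar$, and then show that a simple next-fit rule driven by these checkpoints never deviates from $\Sstar$ by more than one small job (which has size at most $\eps U$) per machine. Concretely, I would index the small jobs $s_1, s_2, \dots$ in their order of arrival; for each machine $k$ set $y_k$ to be the total processing time of small jobs placed on machine $k$ in $\Sstar$, and let $\gamma_k = \sum_{j \le k} y_j$, with $\gamma_0 = 0$. I would then define $i(k)$ to be the smallest index with $\sum_{j=1}^{i(k)} v(s_j) \ge \gamma_k$ and assign the small jobs $s_{i(k-1)+1}, \dots, s_{i(k)}$ to machine $k$ of $\Sprime$ (with $i(m)$ taken to be the index of the last small job so that everything gets assigned).

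The main step is then a two-line accounting argument. Writing $z_k$ for the total small load assigned to machine $k$ by the above rule, I would argue that the cumulative small load satisfies $\sum_{j \le k} z_j \in [\gamma_k,\, \gamma_k + \eps U)$ for every $k < m$: the lower bound is the defining property of $i(k)$, and the upper bound holds because the last job added, $s_{i(k)}$, has size at most $\eps U$ (it is a small job). For $k = m$ the equality $\sum_{j \le m} z_j = \gamma_m$ holds because every small job is assigned. Subtracting consecutive cumulative sums immediately gives $z_k \in [y_k - \eps U,\, y_k + \eps U]$ for every $k$.

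To finish, I would use the fact that $\Sprime$ was obtained from $\Sstar$ by removing only the small jobs, so the large- and huge-job load on each machine $k$ of $\Sprime$ equals $L_k(\Sstar) - y_k$. After the small-job extension the load on machine $k$ is therefore $(L_k(\Sstar) - y_k) + z_k \in [L_k(\Sstar) - \eps U,\, L_k(\Sstar) + \eps U]$, which is the claim.

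I do not expect a real obstacle here; the argument is essentially bookkeeping, and the key idea is simply to choose the checkpoints $\gamma_k$ rather than, say, trying to match each machine's small-job load directly. The only edge cases to mention are that machines with $y_k = 0$ automatically receive no small jobs (since $\gamma_k = \gamma_{k-1}$ forces $i(k) = i(k-1)$), and that the procedure indeed terminates having placed every small job. Both are immediate from the definitions.
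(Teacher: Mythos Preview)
Your proposal is correct and follows essentially the same argument as the paper: define the cumulative targets $\gamma_k=\sum_{j\le k} y_j$, let $i(k)$ be the first index where the running small-job total meets $\gamma_k$, assign jobs $i(k-1)+1,\dots,i(k)$ to machine $k$, and then difference the cumulative bounds to get $z_k\in[y_k-\eps U,\,y_k+\eps U]$. Your added final sentence translating this into the load bound on $\Sprime$ (and your remarks on the $y_k=0$ and $k=m$ edge cases) are fine and make explicit what the paper leaves implicit.
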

\begin{proof}
Consider the small jobs in the order in which they arrive. 
Denote the processing time of the $j$th small job in this order by $x_j$ for $j=1,\dots$.
For $i=1,\dots,m$, let $y_i$ be the total processing time of small jobs assigned to machine $i$ in $\Sstar$.
Let $i(0)=0$, and
for $k=1,\dots,m$, let $i(k)$ be the minimum index such that $\sum_{j=1}^{i(k)} x_j \ge \sum_{i=1}^k y_i$.
Finally, for $k=1,\dots,m$, assign the small jobs $i(k-1)+1,\dots,i(k)$ to machine $k$.
(If $i(k)=i(k-1)$, machine $k$ receives no small jobs.)

By the definition of $i(k)$ and the fact that all small jobs have a processing time at most $\eps U$, 
the total processing time of small jobs assigned to machines $1,\dots,k$ is in $[\sum_{i=1}^k y_i, \sum_{i=1}^k y_i+\eps U]$
for $k=1,\dots,m$.
By taking the difference between the total assigned processing time for the first $k-1$ and for the first $k$ machines,
it immediately follows that the total processing time of small jobs assigned to machine $k$ is
in $[y_k-\eps U, y_k+\eps U]$.
\end{proof}
Note that some machines may not receive any small jobs in this process. We will use the advice bits to separate the machines that receive small jobs from the ones that do not, so that we can assign the small jobs to consecutive machines.

We now define the schedule $\Salg$, using the following procedure. Assign the machine patterns of $\Sprime$ in the same order to the machines of $\Salg$. For each large or huge job $r_i \in \sigma$, in the order they appear in $\sigma$, assign $r_i$ with type $t_i$ to the first machine in $\Salg$ such that the number of jobs with type $t_i$ currently scheduled is less than the number of jobs of type $t_i$ indicated by the machine pattern. After all the large and huge jobs have been processed, assign the small jobs to the machines of $\Salg$ exactly as they are assigned in $\Sprime$ in Lemma \ref{lem:all}. 

\begin{lemma}\label{lem:schedS}
  For $1 \le i \le m$, $L_i(\Salg)\in [(1-\eps)L_i(\Sstar) - \eps U, (1+\eps)L_i(\Sstar) + \eps U]$.
\end{lemma}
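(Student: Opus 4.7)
The plan is to decompose the load on each machine $i$ as $L_i(S) = L_i^{\mathrm{large}}(S) + L_i^{\mathrm{huge}}(S) + L_i^{\mathrm{small}}(S)$ and bound the discrepancy between $\Salg$ and $\Sstar$ on each part separately. The crucial structural fact, which I will verify first, is that the assignment procedure forces the multiset of job types on machine $i$ in $\Salg$ to coincide with the multiset of large/huge types on machine $i$ in $\Sstar$. Fixing a type $t \in \{0,\dots,\msTypes\}$, the algorithm processes all type-$t$ jobs in the order they appear in $\sigma$ and sends each to the lowest-indexed machine whose pattern is not yet saturated for type $t$; since the patterns of $\Salg$ agree with those of $\Sprime$ (and hence of $\Sstar$) and the total supply of type-$t$ jobs in $\sigma$ equals the total demand across the patterns, the number of type-$t$ jobs placed on machine $i$ in $\Salg$ equals the number in $\Sstar$.

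Given this matching, I will bound the three contributions. For large jobs (types $0,\dots,\msTypes-1$), any two jobs of the same type $t$ lie in $(\eps(1+\eps)^tU,\,\eps(1+\eps)^{t+1}U]$ and so differ by a factor of at most $1+\eps$; combined with the matching multisets this gives
\[
 L_i^{\mathrm{large}}(\Salg) \in \Bigl[\tfrac{1}{1+\eps}\,L_i^{\mathrm{large}}(\Sstar),\,(1+\eps)\,L_i^{\mathrm{large}}(\Sstar)\Bigr] \subseteq \Bigl[(1-\eps)\,L_i^{\mathrm{large}}(\Sstar),\,(1+\eps)\,L_i^{\mathrm{large}}(\Sstar)\Bigr],
\]
using $1/(1+\eps) \ge 1-\eps$ for $\eps \in (0,1/2)$. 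For huge jobs, Assumption~\ref{ass:hugeJob} guarantees that each huge job sits alone on its machine in $\Sstar$, so the machines carrying huge jobs contain no large jobs and are numbered only via the ``arbitrary order'' clause of the framework; I will pin this arbitrary order down to coincide with the arrival order of the huge jobs in $\sigma$, so that the $j$-th huge job to arrive is assigned to the same machine in $\Salg$ as in $\Sstar$ and the huge-job load on machine $i$ matches exactly. For small jobs, the assignment used in $\Salg$ is precisely the next-fit procedure of Lemma~\ref{lem:all}, which depends only on the $\Sstar$-quantities $y_k$ (not on the actual large-job processing times), so
\[
 L_i^{\mathrm{small}}(\Salg) \in \bigl[L_i^{\mathrm{small}}(\Sstar) - \eps U,\; L_i^{\mathrm{small}}(\Sstar) + \eps U\bigr].
\]

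Summing the three estimates and using $L_i^{\mathrm{large}}(\Sstar) \le L_i(\Sstar)$ produces
\[
 L_i(\Salg) \in \bigl[L_i(\Sstar) - \eps\,L_i^{\mathrm{large}}(\Sstar) - \eps U,\; L_i(\Sstar) + \eps\,L_i^{\mathrm{large}}(\Sstar) + \eps U\bigr] \subseteq \bigl[(1-\eps)L_i(\Sstar) - \eps U,\;(1+\eps)L_i(\Sstar) + \eps U\bigr],
\]
as required. The only delicate point is the treatment of huge jobs, whose processing times within their common type can vary without bound; this is handled exactly once, by the ordering convention above, which is legitimate precisely because Assumption~\ref{ass:hugeJob} isolates every huge job on its own machine. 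The rest of the proof is bookkeeping.
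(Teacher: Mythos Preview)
Your proof is correct and follows essentially the same line as the paper's: bound the small-job contribution via Lemma~\ref{lem:all} and the large-job contribution via the $(1+\eps)$-ratio within a type, then use $1/(1+\eps)\ge 1-\eps$. The paper compresses all of this into two sentences and does not separate out huge jobs explicitly.

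Your explicit treatment of huge jobs is a genuine clarification rather than a different method. The paper's sentence ``jobs of the same type differ by a factor of at most $1+\eps$'' is literally false for the huge type, and the paper silently relies on the fact that the machine ordering for machines without large jobs was declared ``arbitrary'', so one may (and must) choose it to match the arrival order of the huge jobs, giving an exact match on those machines. You spell this out; the paper leaves it implicit. One small refinement: the ``arbitrary order'' block also contains machines with only small jobs, so to be fully precise you should say that within that block the huge-job machines are listed first, in the arrival order of their huge jobs, followed by the remaining (small-only or empty) machines.
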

\begin{proof}
 By Lemma \ref{lem:all} and the fact that 
 jobs of the same type differ by a factor of at most $1+\eps$, 
 we have
$L_i(\Salg)\in \left[\frac{1}{1+\eps}L_i(\Sstar) - \eps U, (1+\eps)L_i(\Sstar) + \eps U\right]$. The claim follows since $1/(1+\eps) > 1-\eps$ for $\eps>0$.
\end{proof}

We have thus shown that in $\Salg$ the load on 
every 
machine is very close to the optimal load (for an appropriate choice of $U$). Note that this statement
is 
independent of the objective function. This means if we can find such a schedule $S$ online
with a good value of $U$, we can achieve 
our goal 
for every function of the form $\sum_{i=1}^m f(L_i)$, where $f$ satisfies the property that
if $x\le (1+\eps)y$ then $f(x)\le (1+O(1)\eps)f(y)$. 

We now define the online algorithm with advice for the general framework, which produces a schedule equivalent to $\Salg$ up to a permutation of the machines. For simplicity of presentation, we assume that this permutation is the identity permutation.

\begin{figure}[htb]
\centering
\begin{tikzpicture}[thick]
  \node at (-0.5,2) {$\Salg$:};
  \node[below,align=center] at (-0.5,0) {$i,\kappa_i:$};
  \draw[ultra thick] (0,2) -- (0,0) to node[midway,below,align=center,black] (1z) {$1,0$} (1,0) -- (1,2);
  \draw[ultra thick,double] (1.25,2) -- (1.25,0) to node[midway,below,align=center,black] (2z) {$2,4$} (2.25,0) -- (2.25,2);
  \draw[ultra thick,double] (2.5,2) -- (2.5,0) to node[midway,below,align=center,black] (3z) {$3,1$} (3.5,0) -- (3.5,2);
  \draw[ultra thick] (3.75,2) -- (3.75,0) to node[midway,below,align=center,black] (4z) {$4,0$} (4.75,0) -- (4.75,2);
  \draw[line cap=round,dash pattern=on 0pt off 2\pgflinewidth, line width=4pt] (5,1) -- (6,1);
  \node at (5.25, -0.25) (5z) {};
  \node at (5.5, -0.25) (6z) {}; 
  \draw[ultra thick] (6.25,2) -- (6.25,0) to node[midway,below,align=center,black] (m-1z) {$m-1,0$} (7.25,0) -- (7.25,2);
  \draw[ultra thick,double] (7.5,2) -- (7.5,0) to node[midway,below,align=center,black] (mz) {$m,2$} (8.5,0) -- (8.5,2);

  \node at (-0.5,-1.75) {$Z$:};
  \draw[ultra thick] (0,-1.75) -- (0,-3.75) to node[midway,below,align=center] {$1$} (1,-3.75) -- (1,-1.75);
  \draw[ultra thick] (0,-1.75) -- (0,-3.75) to node[midway,below,align=center,black] {$1$} (1,-3.75) -- (1,-1.75);
  \node at (0.5, -1.75) (1s) {};
  \draw[ultra thick] (1.25,-1.75) -- (1.25,-3.75) to node[midway,below,align=center] {$2$} (2.25,-3.75) -- (2.25,-1.75);
  \draw[ultra thick] (1.25,-1.75) -- (1.25,-3.75) to node[midway,below,align=center,black] {$2$} (2.25,-3.75) -- (2.25,-1.75);
  \node at (1.75, -1.75) (2s) {};
  \draw[ultra thick] (2.5,-1.75) -- (2.5,-3.75) to node[midway,below,align=center] {$3$} (3.5,-3.75) -- (3.5,-1.75);
  \draw[ultra thick] (2.5,-1.75) -- (2.5,-3.75) to node[midway,below,align=center,black] {$3$} (3.5,-3.75) -- (3.5,-1.75);
  \node at (3, -1.75) (3s) {};
  \draw[ultra thick] (3.75,-1.75) -- (3.75,-3.75) to node[midway,below,align=center] {$4$} (4.75,-3.75) -- (4.75,-1.75);
  \draw[ultra thick] (3.75,-1.75) -- (3.75,-3.75) to node[midway,below,align=center,black] {$4$} (4.75,-3.75) -- (4.75,-1.75);
  \node at (4.25, -1.75) (4s) {};
  \draw[line cap=round,dash pattern=on 0pt off 2\pgflinewidth, line width=4pt] (5,-2.75) -- (6,-2.75); 
  \node at (5.5, -1.75) (5s) {};
  \node at (5.75, -1.75) (6s) {};
  \draw[ultra thick,double] (6.25,-1.75) -- (6.25,-3.75) to node[midway,below,align=center,black] {$m-1$} (7.25,-3.75) -- (7.25,-1.75);
  \node at (6.75, -1.75) (m-1s) {};
  \draw[ultra thick,double] (7.5,-1.75) -- (7.5,-3.75) to node[midway,below,align=center,black] {$m$} (8.5,-3.75) -- (8.5,-1.75);
  \node at (8, -1.75) (ms) {};
  \node[below,align=center] at (-0.5,-3.75) {$i:$};
  \draw[thick,->] (1z) -- (1s);
  \draw[thick,->] (2z) -- (ms);
  \draw[thick,->] (3z) -- (m-1s);
  \draw[thick,->] (4z) -- (2s);
  \draw[thick,->] (5z) -- (3s);
  \draw[thick,->] (6z) -- (4s);
  \draw[thick,->] (m-1z) -- (5s);
  \draw[thick,->] (mz) -- (6s);
  \node[draw,fill=black!30,anchor=west,single arrow,minimum height=5.5cm] at (0,-2.75) {$\kappa_i = 0$};
  \node[draw,fill=black!30,anchor=west,single arrow,shape border rotate=180,minimum height=3cm] at (5.5,-2.75) {$\kappa_i > 0$};
\end{tikzpicture}
\caption[An illustration of the permutation of the machines of $\Salg$ when scheduled by the online algorithm.]{In order to produce the exact same schedule as $\Salg$, the online algorithm must permute the machines of $\Salg$ based on the number of small jobs scheduled per machine (denoted by $\kappa_i$ for machine $i$). This figure illustrates such a permutation. $Z$ denotes the schedule produced by the online algorithm. Note that, in $Z$, machines with no small jobs, single line ($\kappa_i = 0$), are in the same relative order as $\Salg$ and machines with small jobs, double line ($\kappa_i > 0$), are in reverse relative order. This allows the jobs for machines without small jobs to be scheduled from left to right and jobs for machines with small jobs to be scheduled from right to left.}\label{fig:schedPerm}
\end{figure}
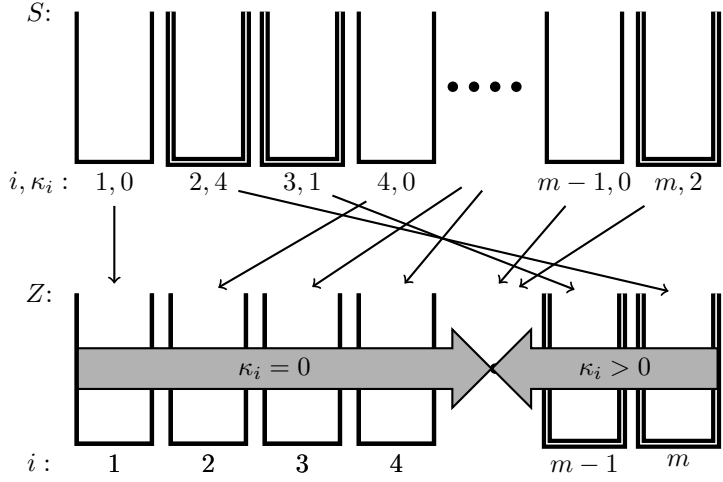

For the first $m$ requests, the general framework receives as advice a machine pattern and a bit $y$, which indicates whether this machine contains small jobs or not (see Figure~\ref{fig:schedPerm}). 
For $r_j$, $1 \le j \le m$, if 
$y=0$, the framework assigns the machine pattern to the {\em highest} machine number without an assigned pattern.  Otherwise, 
the framework will assign the machine pattern to the {\em lowest} machine number without an assigned pattern.  For each request $r_i$ in $\sigma$, the type of $r_i$, denoted by $t_i$, is received as advice.  The framework schedules $r_i$, according to $t_i$, as follows: 

\paragraph{Small Jobs $(t_i = -1)$} For scheduling the small jobs, the algorithm maintains a pointer to a machine (initially machine $m$) indicating the machine that is currently scheduling small jobs. With each small job, the algorithm gets a bit of advice $x$ that indicates if this pointer should be moved to the machine with the preceding serial number. If so, the pointer is moved prior to scheduling the small job.  Then, $r_i$ is scheduled on the machine referenced by the pointer.

\paragraph{Large and Huge Jobs $(0 \le t_i \le \lceil\log_{1+\eps}\frac1\eps\rceil)$} 
The algorithm schedules $r_i$ on a machine where the number of jobs of type $t_i$ is less than the number indicated by its pattern.  

\subsubsection{Formal advice definition.}\label{schAdvDef}

\paragraph{Machine Patterns}
For the first $m$ requests, a machine pattern is received as advice.  Specifically, all possible machine patterns will be enumerated and the id of the pattern, encoded in binary, will be sent as advice for each machine.  For large jobs, there are at most $v$ jobs in a machine pattern vector, and each job has one of $\msTypes$ possible types.  The machine patterns can be described with the jobs ordered from smallest to largest since the order of the jobs on the machine is not important.  This is equivalent to pulling $v$ names out of  $\msTypes + 1$ names (one name to denote an empty entry and $\msTypes$ names for each of the large job types), where repetitions are allowed and order is not significant.  Therefore, there are 
\[
\binom{v + \msTypes}{\msTypes} \le \msTypes^v 
\]
different possible machine patterns for the machines scheduling large jobs.  Additionally, there is a machine pattern for machines with only small jobs and a machine pattern for machines with only a huge job.  Hence, at most 
$\beta(v) \le \left\lceil \log(2 + \msTypes^v) \right\rceil < v\log\left(\frac{3\log(1/\eps)}{\log(1+\eps)}\right) + 1$ bits are required to encode the index of a machine pattern in an enumeration of all possible machine patterns in binary. As we show in the following, for the cases of makespan, machine cover and $\ell_p$ norm, $v \le 1/\eps + 1$ and $\beta(v) < \frac{1 + \eps}{\eps}\log\left(\frac{3\log(1/\eps)}{\log(1+\eps)}\right) + 1$.

\paragraph{Advice per Request}
In order to define the advice, for each machine $m_i \in \Salg$, we define a value $\kappa_i$ that is the number of small jobs scheduled on $m_i$.

Per request, the advice string will be $wxyz$, where $w$ has a length of \\
$\left\lceil \log (2+\msTypes) \right\rceil < \log\left(\frac{3\log(1/\eps)}{\log(1+\eps)}\right) + 1$ bits to indicate the job type, $x$ and $y$ 
are 1 bit in length (as described above), and $z$ has a length of $\beta(v)$ bits to indicate a machine pattern. $wxyz$ is defined as follows for request $r_i$:

\sAdvice
{\bf $w$:}
& & A number in binary representing the type of $r_i$.\\
{\bf $x$:}
& $r_i$ is a small job: & $x = 1$ if the small job should be scheduled on the next machine.  Otherwise, $x = 0$.  More formally, let $s$ be the number of small jobs in $\left<r_1,\ldots,r_{i-1}\right>$.  If there exists and an integer $1 \le j \le m$ such that $\sum_{k=1}^j\kappa_k = s$, then $x = 1$.  Otherwise, $x = 0$. \\
& otherwise: & $x$ is unused and the bit is set to $0$. \\
{\bf $y$:}
& $i \le m$: & If $\kappa_i > 0$, $y = 0$.  Otherwise, $y = 1$. \\
& $i > m$: & This bit is unused and set to 0. \\
{\bf $z$:}
& $i \le m$: & $z$ is a number in binary indicating the machine pattern of machine $i$ in $\Sprime$.\\
& $i > m$: & $z$ is unused and all the bits are set to $0$. \\
\fAdvice

\begin{fact}\label{fact:schGenAdv}
This framework uses less than 
$\log\left(\frac{3\log(1/\eps)}{\log(1+\eps)}\right) + \beta(v) + 3$ bits of advice per request.  
\end{fact}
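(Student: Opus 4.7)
The plan is to prove Fact~\ref{fact:schGenAdv} by a straightforward accounting of the four components of the per-request advice string $wxyz$ defined immediately above the statement. All the needed component bounds are either explicitly stated or follow directly from the enumeration counts used to assign advice, so the argument is essentially a summation.

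Concretely, I would proceed as follows. First, I would record the length of each component: $|w| \le \lceil \log(2+\msTypes) \rceil$, since $w$ indexes one of $\msTypes + 2$ job types (the $\msTypes$ large types plus a small and a huge type); $|x|=1$ and $|y|=1$ by definition (one bit to decide whether a small job advances the pointer, and one bit to split machines receiving small jobs from those that do not); and $|z| \le \beta(v)$, by the enumeration bound already justified under ``Machine Patterns,'' using that a machine pattern is a multiset of size at most $v$ drawn from $\msTypes + 1$ symbols plus two extra patterns (all-small and single-huge). Summing gives $|wxyz| \le \lceil \log(2+\msTypes) \rceil + 2 + \beta(v)$.

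To conclude, I would upper-bound $\lceil \log(2+\msTypes) \rceil$ by $\log\!\left(\tfrac{3\log(1/\eps)}{\log(1+\eps)}\right) + 1$. This is where the only (very mild) computation appears: since $\msTypes \le \tfrac{\log(1/\eps)}{\log(1+\eps)} + 1$, we have $2 + \msTypes \le \tfrac{\log(1/\eps)}{\log(1+\eps)} + 3$, and it suffices to verify $\tfrac{\log(1/\eps)}{\log(1+\eps)} + 3 \le \tfrac{3\log(1/\eps)}{\log(1+\eps)}$, i.e.\ $3\log(1+\eps) \le 2\log(1/\eps)$. This holds comfortably for $\eps < 1/2$ because $\log(1+\eps) < \eps < 1$ while $\log(1/\eps) \ge 1$. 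The ceiling adds at most $1$, absorbed in the ``$+1$'' above. Adding the $2$ bits contributed by $x$ and $y$ then yields the claimed upper bound $\log\!\left(\tfrac{3\log(1/\eps)}{\log(1+\eps)}\right) + \beta(v) + 3$.

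There is no genuine obstacle here; the statement is a bookkeeping fact whose only subtlety is the routine estimate of $\lceil \log(2+\msTypes) \rceil$ in terms of the expression appearing in the bound. The strict inequality claimed in the fact is preserved because the ceiling bound $\lceil \log(2+\msTypes) \rceil < \log(2+\msTypes) + 1$ is itself strict whenever $2+\msTypes$ is not an exact power of two, and otherwise the slack in the $3\log(1+\eps) \le 2\log(1/\eps)$ inequality for $\eps < 1/2$ provides the necessary strict gap.
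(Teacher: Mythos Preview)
Your approach is exactly the paper's: the fact is stated immediately after the per-request advice string $wxyz$ is defined, and the paper's ``proof'' is simply the implicit summation of the four component lengths already given there ($|w|<\log(3\log(1/\eps)/\log(1+\eps))+1$, $|x|=|y|=1$, $|z|=\beta(v)$). One small slip to fix: in base~2 the inequality $\log(1+\eps)<\eps$ does \emph{not} hold (e.g.\ $\log_2 1.4\approx 0.485>0.4$); however, your target inequality $3\log(1+\eps)\le 2\log(1/\eps)$ is still true for $0<\eps<1/2$ since $3\log_2(3/2)\approx 1.75<2\le 2\log_2(1/\eps)$, so just replace that one justification.
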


The following theorem, which follows immediately from the definition of the general framework and Lemma~\ref{lem:schedS}, summarizes the main result of this section.
\begin{theorem}\label{thm:genFrame}
For any  $\sigma$, an $\eps$, $0 < \eps < 1/2$, and a $U > 0$ such that there exists an $\Sstar$ under  Assumption~\ref{ass:hugeJob},  the general framework schedules $\sigma$ such that 
for all machines, $1 \le i \le m$, $L_i(\Salg)\in [(1-\eps)L_i(\Sstar) - \eps U, (1+\eps)L_i(\Sstar) + \eps U]$.
 \end{theorem}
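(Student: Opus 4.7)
The plan is to show that the schedule $Z$ produced online by the framework coincides, up to a permutation of the machines, with the schedule $\Salg$ defined in the analysis, and then invoke Lemma~\ref{lem:schedS} to obtain the stated load bounds. Since the objective-function bounds in the theorem depend only on the multiset of machine loads, any such permutation is harmless.

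First I would argue that the assignment of machine patterns to machines in $Z$ is a valid permutation of the assignment in $\Salg$. By the advice bit $y$ attached to each of the first $m$ requests, every machine of $\Salg$ that contains at least one small job (i.e.\ $\kappa_i>0$) is assigned in $Z$ to a machine index chosen from the low end of $\{1,\dots,m\}$, while every machine with $\kappa_i=0$ is assigned to a machine index chosen from the high end. Because the pattern sent with the $j$-th request is the pattern of the $j$-th machine of $\Salg$ (in the canonical ordering defined before Lemma~\ref{lem:all}), the relative order of the small-job machines is preserved in $Z$, while the relative order of the non-small-job machines is reversed; in either case $Z$ is a permutation of $\Salg$ at the level of patterns.

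Next I would verify that the large and huge jobs land on the machine prescribed by $\Salg$ (under this permutation). For any type $t$, the framework processes the large/huge jobs of type $t$ in order of arrival and places each on the lowest-indexed machine whose pattern still has an unfilled type-$t$ slot. The construction of $\Salg$ does exactly the same thing by definition, so the placement of large and huge jobs coincides with that of $\Salg$ under the permutation above. For the small jobs, I would show that the pointer-based scheduling, combined with the advice bit $x$, reproduces the assignment from Lemma~\ref{lem:all}: the pointer starts at machine $m$ and moves to the next lower index precisely when the cumulative number of small jobs seen so far equals $\sum_{k=1}^{j}\kappa_k$ for some $j$. By the definition of $x$, this is exactly the moment at which Lemma~\ref{lem:all} finishes filling machine $m-j+1$ (recall that small-job machines are the low-index ones in $Z$, visited in reverse by the pointer), so the number of small jobs placed on each machine of $Z$ matches the corresponding $\kappa_i$ from $\Salg$.

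Combining these two observations, $Z$ and $\Salg$ have the same multiset of machine loads. Applying Lemma~\ref{lem:schedS} then yields $L_i(\Salg)\in[(1-\eps)L_i(\Sstar)-\eps U,(1+\eps)L_i(\Sstar)+\eps U]$ for all $i$, which by the permutation equivalence gives the same bound for the loads in $Z$. The only slightly delicate step is the small-job bookkeeping, since one must check that the $x$-bit is well-defined and that the pointer never needs to jump by more than one machine; this follows because $\kappa_j$ can be zero for some $j$, in which case the $x$-bit triggers on two consecutive small requests (or on the first small request if the initial pointer position corresponds to an empty machine), advancing the pointer past the empty machines.
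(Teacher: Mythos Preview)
Your approach is essentially the paper's: the theorem is stated there as following ``immediately from the definition of the general framework and Lemma~\ref{lem:schedS},'' and you are simply unpacking the first half of that sentence by arguing that the online schedule $Z$ is a permutation of $\Salg$ and then invoking Lemma~\ref{lem:schedS}. That structure is exactly right.

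However, you have the details of the permutation inverted in two places. First, by the advice definition, $\kappa_i>0$ gives $y=0$, and the framework then assigns that pattern to the \emph{highest} available machine index in $Z$, not the low end; conversely, $\kappa_i=0$ gives $y=1$ and the pattern goes to the lowest available index. Second, this means the small-job machines end up at the high end of $Z$ in \emph{reverse} $\Salg$-order, while the non-small-job machines end up at the low end in the \emph{same} $\Salg$-order --- the opposite of what you wrote (see Figure~\ref{fig:schedPerm}). These inversions do not invalidate the argument, since all that matters for the load bounds is that $Z$ is some permutation of $\Salg$, but the details as written are incorrect.

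Your final paragraph also misdiagnoses the small-job bookkeeping. The pointer never needs to ``jump past empty machines,'' and that is precisely the purpose of the $y$-bit permutation: it groups all machines with $\kappa_i>0$ contiguously at the high end of $Z$, so the pointer, starting at machine $m$ and decrementing, visits exactly those machines and only those machines. The $x$-bit condition ``$\sum_{k=1}^j\kappa_k=s$ for some $j$'' is insensitive to zero terms in the sum, so it fires exactly at the partial sums over the $\kappa_i>0$ machines in $\Salg$-order, which is the order in which the pointer traverses them in $Z$. There is no need to advance past empty machines and no risk of the pointer needing to move by more than one.
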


\subsection{Minimum Makespan}\label{sec:DetMM}
For minimizing the makespan on $m$ identical machines, we will define $U = \OPT$, where $\OPT$ is the minimum makespan for $\sigma$.

\begin{fact}\label{fact:HugeMS}
If $U = \OPT$, there are no huge jobs as the makespan is at least as large as the largest processing time of all the jobs.
\end{fact}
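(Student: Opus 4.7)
The plan is very short because the fact itself is almost immediate from the definition of makespan. I would argue as follows: in any feasible schedule, every job must be assigned to some machine, and the load of that machine is at least the processing time of that single job. Hence the maximum load over all machines, i.e.\ the makespan of any schedule, is at least $\max_{r_i \in \sigma} v(r_i)$. In particular, this lower bound also holds for an optimal schedule, so $\OPT \geq \max_{r_i \in \sigma} v(r_i)$.

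Next, I would recall the definition of a huge job from the general framework: a job $r_i$ is huge if $v(r_i) > U$. Setting $U = \OPT$ and combining with the inequality above, I get $v(r_i) \leq \OPT = U$ for every $r_i \in \sigma$, so no job qualifies as huge.

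There is essentially no obstacle here; the argument is a one-liner. The only thing worth being a little careful about is making clear that the bound ``makespan $\geq$ largest job size'' is a lower bound on every schedule (and therefore on $\OPT$), rather than a statement about a specific schedule produced by some algorithm. Once that is clear, the conclusion ``no huge jobs exist'' follows directly from the strict inequality $v(r_i) > U$ in the definition of huge.
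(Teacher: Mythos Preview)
Your argument is correct and matches the paper's reasoning exactly: the paper states this as a fact with the justification embedded in the statement itself (the clause ``as the makespan is at least as large as the largest processing time of all the jobs''), and your proposal simply spells out that one-line observation carefully.
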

By the above fact, we know that Assumption~\ref{ass:hugeJob} holds.

\begin{lemma}\label{cl:MSnumJobs}
The length of the machine pattern vector is at most $\frac1\eps$.
\end{lemma}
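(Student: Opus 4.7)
The plan is straightforward and relies on two facts already established: by Fact~\ref{fact:HugeMS} (taking $U = \OPT$), no job in $\sigma$ is huge, so every entry of any machine pattern corresponds to a large job; and by the definition of $\Sstar$ as an optimal makespan schedule, every machine load in $\Sstar$ is at most $\OPT = U$.

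First I would recall that a large job, by the classification used in the framework, has processing time strictly greater than $\eps U$. Then, for any machine $i$ in $\Sstar$, if $k$ denotes the number of large jobs scheduled on $i$, the load satisfies $L_i(\Sstar) > k \cdot \eps U$. Combining this with $L_i(\Sstar) \le U$ yields $k \cdot \eps U < U$, so $k < 1/\eps$, and therefore $k \le \lfloor 1/\eps \rfloor \le 1/\eps$ (recall that we assume $1/\eps$ is a natural number).

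Since the length of the machine pattern vector for machine $i$ is exactly the number of large (or huge) jobs scheduled on $i$ in $\Sprime$, which is the same as in $\Sstar$, and since we just bounded this quantity by $1/\eps$, the conclusion follows by taking the maximum over all machines. No step here is really an obstacle; the only thing to be careful about is using the strict inequality from the definition of large ($p(r) > \eps U$, not $\ge$) so that the bound $1/\eps$ is actually attained as a ceiling rather than exceeded.
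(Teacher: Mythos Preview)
Your argument is correct and is essentially the same as the paper's: both use that each large job has processing time strictly greater than $\eps U=\eps\,\OPT$ together with $L_i(\Sstar)\le\OPT$ to bound the number of large jobs on any machine by $1/\eps$. Your version is simply a bit more explicit (invoking Fact~\ref{fact:HugeMS} to rule out huge jobs and spelling out the inequality chain), but the idea is identical.
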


\begin{proof}
This lemma follows from the fact that all large jobs have a processing time greater than $\eps U = \eps\OPT$ and that a machine in $\Sstar$ with more than $\frac1\eps$ jobs with processing times greater than $\eps\OPT$ is more than the maximum makespan, a contradiction.
\end{proof}

From Lemma \ref{cl:MSnumJobs}, $v = \frac1\eps$.  Using this value with Fact \ref{fact:schGenAdv} of the general framework, gives the following.
\begin{fact}
The online algorithm with advice, based on the general framework, uses at most
 $\frac{2}{\eps}\left(\log\left(\frac{3\log(1/\eps)}{\log(1+\eps)}\right)\right) + 4$
bits of advice per request.
\end{fact}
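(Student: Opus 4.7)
The plan is to chain together two already-established bounds: the bound $v \le 1/\eps$ on machine pattern length from Lemma~\ref{cl:MSnumJobs}, and the general advice accounting of Fact~\ref{fact:schGenAdv}. First I would substitute $v = 1/\eps$ into the expression $\beta(v) < v\log\!\left(\frac{3\log(1/\eps)}{\log(1+\eps)}\right) + 1$ derived in the formal advice definition of the framework, which gives
\[
\beta(1/\eps) \;<\; \frac{1}{\eps}\log\!\left(\frac{3\log(1/\eps)}{\log(1+\eps)}\right) + 1.
\]

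Next I would insert this into Fact~\ref{fact:schGenAdv}, whose per-request cost is $\log\!\left(\frac{3\log(1/\eps)}{\log(1+\eps)}\right) + \beta(v) + 3$, producing the combined bound
\[
\left(1 + \tfrac{1}{\eps}\right)\log\!\left(\frac{3\log(1/\eps)}{\log(1+\eps)}\right) + 4.
\]

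To reach the form stated in the fact, I would use the standing assumption $\eps < 1/2$ (in particular $\eps \le 1$) so that $1 + 1/\eps \le 2/\eps$; this immediately upper-bounds the coefficient of the logarithm and yields the claimed $\frac{2}{\eps}\log\!\left(\frac{3\log(1/\eps)}{\log(1+\eps)}\right) + 4$. There is no real obstacle here: the argument is a routine arithmetic consolidation of the three previously proven ingredients, and the only cosmetic care needed is to verify that $v = 1/\eps$ (an integer by the paper's convention that $1/\eps \in \mathbb{N}$) is compatible with the ceiling-free version of $\beta(v)$ used above, which it is since the derivation of $\beta(v) \le v\log(\cdot) + 1$ in Section~\ref{schAdvDef} already absorbed the ceiling.
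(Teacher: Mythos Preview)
Your proposal is correct and follows exactly the same route as the paper: the paper simply states that plugging $v=1/\eps$ from Lemma~\ref{cl:MSnumJobs} into Fact~\ref{fact:schGenAdv} yields the bound, and you have merely written out the one-line arithmetic (including the step $1+1/\eps\le 2/\eps$) that the paper leaves implicit.
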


\begin{theorem}\label{thm:ms}
Given a request sequence $\sigma$, $U = \OPT$ and an $\eps$, $0 < \eps < 1/2$, the online algorithm with advice, based on the general framework schedules the jobs of $\sigma$ such that the online schedule has a makespan of at most $(1 + 2\eps)\OPT$.
\end{theorem}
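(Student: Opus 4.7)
The plan is to apply Theorem~\ref{thm:genFrame} directly, using the specific choice $U = \OPT$ and the structural facts already established for the makespan case. First I would verify that the hypotheses of Theorem~\ref{thm:genFrame} are satisfied in this setting. By Fact~\ref{fact:HugeMS}, with $U = \OPT$ there are no huge jobs, so Assumption~\ref{ass:hugeJob} is trivially true (an optimal schedule vacuously satisfies the condition on huge jobs). Lemma~\ref{cl:MSnumJobs} establishes $v \le 1/\eps$, so the machine-pattern encoding used by the framework is well defined and of the claimed size, and the framework applies without modification.

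Next, I would invoke Theorem~\ref{thm:genFrame} to obtain, for every machine $i$,
\[
L_i(\Salg) \le (1+\eps)L_i(\Sstar) + \eps U.
\]
Since $U = \OPT$ and $L_i(\Sstar) \le \max_j L_j(\Sstar) = \OPT$ for every $i$, this immediately gives
\[
L_i(\Salg) \le (1+\eps)\OPT + \eps\OPT = (1+2\eps)\OPT.
\]
Taking the maximum over all machines $i$ yields a makespan of $\Salg$ of at most $(1+2\eps)\OPT$, which is the claimed competitive bound.

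There is essentially no obstacle here beyond correctly instantiating the framework: all the real work (the grouping of jobs into types, the machine patterns, the next-fit assignment of small jobs from Lemma~\ref{lem:all}, and the load bound of Lemma~\ref{lem:schedS}) is already packaged inside Theorem~\ref{thm:genFrame}. The only makespan-specific ingredients are the two facts that (i) the chosen $U$ eliminates huge jobs (Fact~\ref{fact:HugeMS}), so Assumption~\ref{ass:hugeJob} is satisfied, and (ii) with this $U$ the additive $\eps U$ slack becomes an $\eps\OPT$ multiplicative slack, which combines with the $(1+\eps)$ multiplicative factor from the framework to give exactly $(1+2\eps)\OPT$. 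Thus the proof is essentially a one-line substitution into Theorem~\ref{thm:genFrame} after noting that $L_i(\Sstar) \le \OPT$.
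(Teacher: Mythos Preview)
Your proposal is correct and follows essentially the same approach as the paper: verify Assumption~\ref{ass:hugeJob} via Fact~\ref{fact:HugeMS}, invoke Theorem~\ref{thm:genFrame} to get $L_i(\Salg) \le (1+\eps)L_i(\Sstar) + \eps U$, and then use $L_i(\Sstar) \le \OPT = U$ to conclude $L_i(\Salg) \le (1+2\eps)\OPT$. The paper's proof is the same one-line substitution; your mention of Lemma~\ref{cl:MSnumJobs} is a harmless extra remark about the encoding, not a different argument.
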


\begin{proof}
By Fact~\ref{fact:HugeMS}, Assumption~\ref{ass:hugeJob} holds and Theorem~\ref{thm:genFrame} applies.  

Let $j$ be a machine with the maximum load in $\Sstar$.  By Theorem~\ref{thm:genFrame}, $L_i (\Salg) \le (1 + \eps)L_i(\Sstar) + \eps U \le (1 + 2\eps)\OPT$ as $U = \OPT = L_j(\Sstar) \ge L_i(\Sstar)$ for all $1 \le i \le m$.  
\end{proof}

\subsection{Machine Covering}\label{sec:DetMC}

For maximizing the minimum load, i.e.\ machine covering, on $m$ identical machines, we will define $U = \OPT$, where $\OPT$ is the load of the machine with the minimum load in $\Sstar$.

\begin{lemma}\label{cl:MConeHugeJob}
There exists an optimal schedule $S$ such that any job with processing time more than that of the minimum load, i.e.\ a huge job, will be scheduled on a machine without any other jobs.
\end{lemma}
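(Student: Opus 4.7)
The plan is to start from any optimal schedule $S_0$ (with cover $\OPT$) and apply a sequence of single-job reassignments that eventually isolates each huge job on its own machine, while maintaining the invariant that every machine has load at least $\OPT$. Because $\OPT$ is by definition the maximum achievable cover, this forces the cover of the current schedule to stay exactly $\OPT$ throughout the process.

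First I would record a simple but crucial consequence of this invariant: since the current cover equals $\OPT$, any minimum-load machine has load exactly $\OPT$, and in particular cannot host a huge job (a machine containing a single huge job already has load strictly greater than $\OPT$). Hence a job moved to a current minimum-load machine can never land on top of a huge job that has already been isolated. The single move I use is then the following: while some huge job $h$ shares its machine $M$ with at least one other job, pick an arbitrary companion $j$ of $h$ on $M$ and reassign $j$ to any machine $M'$ of current minimum load. After the move, $L(M) \ge v(h) > \OPT$ because $h$ remains on $M$, and $L(M') \ge \OPT$ because $M'$'s load only grew; so the cover is preserved.

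The main obstacle is termination, because if the chosen $j$ happens itself to be huge, the move can create a fresh huge-with-companions situation on $M'$. To handle this I would introduce the lexicographic potential
\begin{equation*}
\Phi \;=\; \Bigl(\,\sum_{M} \bigl(\#\{\text{huge jobs on }M\}\bigr)^2,\ \#\{(h,j) : h \text{ huge},\ j\ne h,\ j \text{ and } h \text{ share a machine}\}\,\Bigr)
\end{equation*}
and verify two cases. When $j$ is non-huge, the first coordinate of $\Phi$ is unchanged (no huge job has moved) and the second strictly decreases by the number of huge jobs currently on $M$, which is at least one. When $j = h'$ is huge, the machine $M$ contained at least $k \ge 2$ huge jobs (namely $h$ and $h'$) while $M'$ contained zero, so the first coordinate changes by $(k-1)^2 + 1^2 - k^2 = -2k+2 \le -2$, a strict lexicographic decrease. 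Since $\Phi$ takes values in $\mathbb{Z}_{\ge 0}\times\mathbb{Z}_{\ge 0}$, the process halts, and at termination every huge job sits alone on a machine, yielding the desired optimal schedule $S$.
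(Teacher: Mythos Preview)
Your argument is correct and uses the same core idea as the paper: iteratively relocate a companion of a huge job to a current minimum-load machine, exploiting that such a machine has load exactly $\OPT$ and therefore hosts no huge job, so the cover is preserved. The paper moves all non-huge companions of a fixed huge job in one shot and leaves termination informal; your single-job moves together with the lexicographic potential $\Phi$ make termination explicit and, as a bonus, cleanly cover the case of a machine carrying two or more huge jobs and nothing else, which the paper's write-up does not directly address.
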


\begin{proof}
In $S$, let machine $i$ be the machine with the minimum load. Note that by definition a huge job has a processing time that is more than the minimum load. Therefore, machine $i$ cannot contain a huge job. Assume that scheduled on some machine $j \ne i$ is a huge job and one or more large or small jobs. 
We will denote the set of non-huge jobs scheduled on machine $j$ by $J$.  We will define another schedule $S^*$ to be the same schedule as $S$ for all the machines but $i$ and $j$.  In $S^*$, machine $i$ will schedule the same jobs as in $S$ plus all the jobs in $J$ and machine $j$ will only schedule the huge job scheduled on machine $j$ in $S$.  
The load on machine $j$ in $S^*$ is greater than $\OPT$ as it contains a huge job and the load on machine $i$ in $S^*$ is greater than $\OPT$ given that it was $\OPT$ in $S$ and jobs were added to it in $S^*$.  If the load of machine $i$ in $S$ is a unique minimum, then $S^*$ contradicts the optimality of $S$.  Otherwise, there exists another machine, $k \ne i$ and $k \ne j$, with the same load as $i$ in $S$.  Machine $k$ has the same load in $S^*$ as it does in $S$.  Therefore, $S^*$ is an optimal schedule.  This process can be repeated until a contradiction is found or an optimal schedule is created such that no huge job is scheduled on a machine with any other jobs.
\end{proof}

\begin{lemma}\label{cl:MCnumJobs}
There exists an optimal schedule $S$ such that there are at most $1 + \frac1\eps$ non-small jobs scheduled on each machine and huge jobs are scheduled on a machine without any other jobs.
\end{lemma}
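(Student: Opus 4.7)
The plan is to start from the optimal schedule $S$ provided by Lemma~\ref{cl:MConeHugeJob} (so huge jobs are already alone on their machines) and repeatedly apply a local exchange that destroys one "overloaded" machine at a time without ever violating optimality or the huge-job-alone property. Throughout, I call a machine \emph{bad} if it carries more than $1+\frac{1}{\eps}$ non-small jobs. Since $1/\eps$ is assumed to be an integer, a bad machine carries at least $2+\frac{1}{\eps}$ non-small jobs.

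First I would note two structural facts that hold after every iteration. A bad machine $j$ cannot contain a huge job, because machines with a huge job contain only that single non-small job by the invariant. Consequently all non-small jobs on $j$ are large (size $>\eps U$), so the large jobs on $j$ alone contribute a total processing time strictly greater than $(2+\frac{1}{\eps})\eps U = U + 2\eps U$. Second, any machine $i$ realizing the minimum load $U$ in an optimal schedule has no huge job (a huge job would force its load above $U$) and strictly fewer than $1/\eps$ large jobs (otherwise its load already exceeds $\frac{1}{\eps}\cdot\eps U = U$ just from the large jobs, but in fact each large job is strictly larger than $\eps U$).

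The exchange step is then: pick any large job $p$ on a bad machine $j$ and reassign it to a machine $i$ with minimum load. To see that optimality (the cover $U$) is preserved, observe that after removing $p$ from $j$ the remaining jobs on $j$ still include at least $1+\frac{1}{\eps}$ large jobs, contributing total processing time strictly greater than $(1+\frac{1}{\eps})\eps U = U+\eps U > U$, so $j$'s load stays above $U$; meanwhile $i$'s new load is $U+s(p) > U$; all other machines are untouched. The huge-job-alone property is preserved because $p$ is large (not huge) and $i$ had no huge job. Finally $i$ does not become bad, since it now has at most $(\frac{1}{\eps}-1)+1 = \frac{1}{\eps}$ non-small jobs.

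The remaining issue is termination, and this is the only step that needs a mildly careful choice of potential. I would use the potential $\Phi = \sum_{m\text{ bad}} (\text{non-small jobs on } m)$, a non-negative integer. Each exchange removes one non-small job from a bad machine and places it on a machine that was not bad and did not become bad, so $\Phi$ strictly decreases by at least one per iteration. The process must therefore terminate, and it can only terminate in a state with no bad machines, producing the desired optimal schedule. I do not foresee a real obstacle; the one place to be precise is the counting that the remaining large jobs on $j$ after removing $p$ still sum to more than $U$, which is what lets the exchange preserve the cover.
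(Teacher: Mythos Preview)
Your proof is correct and follows essentially the same exchange argument as the paper: start from the schedule of Lemma~\ref{cl:MConeHugeJob}, and while some machine carries too many large jobs, shift one large job to a minimum-load machine. The paper additionally moves all small jobs off the bad machine and splits into cases on whether the minimum is unique, whereas you avoid the case split by noting directly that the cover cannot exceed $U=\OPT$, and you make termination explicit via the integer potential $\Phi$; these are minor refinements rather than a different route.
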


\begin{proof}
By Lemma~\ref{cl:MConeHugeJob}, we can transform any optimal schedule $S$ to an optimal schedule $S'$, where all the huge jobs are scheduled on machines without any other jobs.

In $S'$, let machine $i$ be the machine with the minimum load and assume that some machine $j \ne i$ has more than $1 + \frac1\eps$ large jobs. We will define another schedule $S^*$ to be the same schedule as $S'$ for all the machines but $i$ and $j$. Note that machine $i$ has at most $\frac1\eps$ large jobs scheduled and, since its load is $U$, it cannot contain a huge job because huge jobs have processing times more than $U$. In $S^*$, machine $i$ will schedule the same jobs as $S'$ plus all the small jobs and the largest job scheduled on machine $j$ in $S'$.  The load on machine $j$ in $S^*$ is greater than $\OPT$ as it still has at least $1 + \frac1\eps$ large jobs scheduled on it and the load on machine $i$ in $S^*$ is greater than $\OPT$ given that it was $\OPT$ in $S'$ and jobs were added to it in $S^*$.  If the load of machine $i$ in $S'$ is a unique minimum, then $S^*$ contradicts the optimality of $S'$.  Otherwise, there exists another machine, $k \ne i$ and $k \ne j$, with the same load as $i$ in $S'$.  Machine $k$ has the same load in $S^*$ as it does in $S'$.  Therefore, $S^*$ is an optimal schedule.  This process can be repeated until a contradiction is found or an optimal schedule is created such that no machine has more than $1 + \frac1\eps$ non-small jobs scheduled.
\end{proof}

From Lemma \ref{cl:MCnumJobs}, $v = 1 + \frac1\eps$.  Using this value with Fact \ref{fact:schGenAdv} of the general framework gives the following.
\begin{fact}
The online algorithm with advice, based on the general framework, uses at most
 $\frac{3}{\eps}\left(\log\left(\frac{3\log(1/\eps)}{\log(1+\eps)}\right)\right) + 4$
bits of advice per request.
\end{fact}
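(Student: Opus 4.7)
The plan is to derive the claimed bound by directly substituting the value of $v$ established in Lemma~\ref{cl:MCnumJobs} into the generic advice count given by Fact~\ref{fact:schGenAdv}, then performing a small algebraic estimate that exploits $\eps < 1/2$.

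First, I would invoke Lemma~\ref{cl:MCnumJobs}, which shows that it suffices to base the advice on an optimal schedule whose machine patterns contain at most $1 + \frac{1}{\eps}$ non-small jobs, together with the possibility of a single huge job on an otherwise empty machine. This fixes the parameter $v$ of the general framework at $v = 1 + \frac{1}{\eps}$. Next, I would recall from the formal advice definition that the number of bits used to encode a machine pattern is bounded by $\beta(v) < v\log\bigl(\tfrac{3\log(1/\eps)}{\log(1+\eps)}\bigr) + 1$, so plugging in $v = 1 + \frac{1}{\eps}$ yields
\[
\beta(v) < \left(1 + \tfrac{1}{\eps}\right)\log\!\left(\tfrac{3\log(1/\eps)}{\log(1+\eps)}\right) + 1.
\]

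Then I would apply Fact~\ref{fact:schGenAdv}, which states that the total advice per request is less than $\log\bigl(\tfrac{3\log(1/\eps)}{\log(1+\eps)}\bigr) + \beta(v) + 3$. Combining this with the bound on $\beta(v)$ above gives an upper bound of
\[
\left(2 + \tfrac{1}{\eps}\right)\log\!\left(\tfrac{3\log(1/\eps)}{\log(1+\eps)}\right) + 4
\]
bits of advice per request.

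Finally, I would close the gap to the stated form by observing that $2 + \frac{1}{\eps} \le \frac{3}{\eps}$ whenever $\eps \le 1$, which is certainly satisfied since the framework assumes $0 < \eps < 1/2$. This yields exactly the claimed bound of $\frac{3}{\eps}\log\bigl(\tfrac{3\log(1/\eps)}{\log(1+\eps)}\bigr) + 4$. There is no real obstacle here: the proof is an assembly of Lemma~\ref{cl:MCnumJobs} and Fact~\ref{fact:schGenAdv} followed by a trivial inequality, and the only thing to be slightly careful about is that the additional ``$+1$'' on the machine-pattern count (coming from the huge-job pattern) is already absorbed in the generic $\beta(v)$ bound used by the framework, so no separate accounting for huge machines is required in the machine-covering case.
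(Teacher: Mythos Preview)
Your proposal is correct and follows exactly the paper's approach: the paper simply states that the fact follows by plugging $v = 1 + \tfrac{1}{\eps}$ from Lemma~\ref{cl:MCnumJobs} into Fact~\ref{fact:schGenAdv}, and you have spelled out precisely that substitution and the trivial inequality $2 + \tfrac{1}{\eps} \le \tfrac{3}{\eps}$ that tightens the resulting bound to the stated form.
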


\begin{theorem}
Given a request sequence $\sigma$, $U = \OPT$ and an $\eps$, $0 < \eps < 1$, the online algorithm with advice, based on the general framework, schedules the jobs of $\sigma$ such that the online schedule has a machine cover at least $(1 - 2\eps)\OPT$.
\end{theorem}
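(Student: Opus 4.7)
The plan is to invoke the general framework (Theorem~\ref{thm:genFrame}) with the right choice of optimal schedule, then simply lower-bound the load on every machine using $U = \OPT$.

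First I would fix an optimal schedule $\Sstar$ that satisfies Assumption~\ref{ass:hugeJob}; such a schedule exists by Lemma~\ref{cl:MConeHugeJob} (this is the step where we used up the structural work). Since $\Sstar$ is optimal for machine covering with value $\OPT$, we have $L_i(\Sstar) \geq \OPT = U$ for every machine $i$. Moreover, by Lemma~\ref{cl:MCnumJobs} we may also assume the machine-pattern vectors of $\Sstar$ have length at most $v = 1 + 1/\eps$, so the advice encoding used by the framework is valid.

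Next I would plug $\Sstar$ into Theorem~\ref{thm:genFrame}, which gives, for every machine $i$,
\[
L_i(\Salg) \geq (1-\eps) L_i(\Sstar) - \eps U \geq (1-\eps) U - \eps U = (1 - 2\eps) U = (1 - 2\eps)\OPT.
\]
Taking the minimum over $i$ yields that the cover of $\Salg$ is at least $(1-2\eps)\OPT$, which is exactly the claim.

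There is no real obstacle here: all the work has already been done in Lemmas~\ref{cl:MConeHugeJob} and~\ref{cl:MCnumJobs} (to guarantee the structural assumption and the bound on $v$) and in Theorem~\ref{thm:genFrame} (to produce a schedule whose per-machine load is within the stated additive/multiplicative window of $\Sstar$). The theorem for machine covering is just the lower-bound side of that window, evaluated at the choice $U = \OPT = \min_i L_i(\Sstar)$. The one subtlety worth mentioning explicitly is that the framework is a minimization-style statement about per-machine loads, and the machine-covering objective is a maximization, but the conversion is immediate because every $L_i(\Sstar) \geq U$ so the lower bound $(1-\eps)L_i(\Sstar) - \eps U$ simplifies uniformly to $(1-2\eps)\OPT$.
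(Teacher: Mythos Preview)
Your proof is correct and essentially identical to the paper's: the paper invokes Lemma~\ref{cl:MCnumJobs} to verify Assumption~\ref{ass:hugeJob} (and implicitly the bound on $v$), then applies Theorem~\ref{thm:genFrame} together with $L_i(\Sstar)\ge \OPT = U$ to obtain $L_i(\Salg)\ge (1-\eps)L_i(\Sstar)-\eps U\ge (1-2\eps)\OPT$. The only cosmetic difference is that you cite Lemmas~\ref{cl:MConeHugeJob} and~\ref{cl:MCnumJobs} separately, whereas the paper bundles both via Lemma~\ref{cl:MCnumJobs}.
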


\begin{proof}
By Lemma~\ref{cl:MCnumJobs}, 
Assumption~\ref{ass:hugeJob} holds and Theorem~\ref{thm:genFrame} applies.  

Let $j$ be a machine with the minimum load in $\Sstar$.  By Theorem~\ref{thm:genFrame}, $L_i (\Salg) > (1 - \eps)L_i(\Sstar) - \eps U \ge (1 - 2\eps)\OPT$ as $\OPT = L_j(\Sstar) \le L_i(\Sstar)$ for all $1 \le i \le m$.  
\end{proof}

\subsection{The $\ell_p$ Norm}\label{sec:DetLP}

For minimizing the $\ell_p$ norm on $m$ identical machines, we will define $U = \frac{W}{m}$, where $W$ is the total processing time of all the jobs.

For completeness, we first prove the following technical lemma about convex functions. 
\begin{lemma}
\label{lem:convex}
Let $f$ be a convex function, and let $x_0 > y_0 \ge 0$.
Let $v < x_0-y_0$. Then $f(x_0-v) + f(y_0+v) < f(x_0) + f(y_0)$.
\end{lemma}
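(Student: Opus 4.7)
The plan is to view the transformation $(x_0, y_0) \mapsto (x_0 - v, y_0 + v)$ as a ``smoothing'' that brings the two values closer together while preserving their sum, and then apply the standard majorization trick for convex functions. Assuming implicitly that $v > 0$ (otherwise the inequality is either trivial or reversed), I would set
\[
\lambda \;=\; \frac{(x_0 - v) - y_0}{x_0 - y_0},
\]
and observe that the hypotheses $0 < v < x_0 - y_0$ give $\lambda \in (0,1)$.

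A short calculation then shows
\[
\lambda\, x_0 + (1-\lambda)\, y_0 \;=\; x_0 - v \qquad\text{and}\qquad (1-\lambda)\, x_0 + \lambda\, y_0 \;=\; y_0 + v,
\]
so each of $x_0 - v$ and $y_0 + v$ is a proper convex combination of $x_0$ and $y_0$ (note that the two coefficient vectors are reverses of one another, which is what makes the sum collapse). Applying convexity to each term gives
\[
f(x_0 - v) \;\le\; \lambda f(x_0) + (1-\lambda) f(y_0), \qquad f(y_0 + v) \;\le\; (1-\lambda) f(x_0) + \lambda f(y_0),
\]
and summing the two inequalities produces $f(x_0 - v) + f(y_0 + v) \le f(x_0) + f(y_0)$, since the coefficients of $f(x_0)$ and $f(y_0)$ on the right each add to $1$.

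The one delicate point is the strictness of the inequality as stated in the lemma: a purely convex $f$ (for example an affine function) only gives $\le$, so some form of strict convexity is needed. I expect this lemma is applied later only to $f(x) = x^p$ with $p > 1$, which is strictly convex on $[0,\infty)$; under strict convexity the two displayed inequalities are strict whenever $x_0 \ne y_0$ and $\lambda \in (0,1)$, both of which hold here, so the summed inequality is strict as claimed. The main ``obstacle'' is thus not technical but interpretive: noting that the statement implicitly requires $v > 0$ and that strictness needs strict convexity (or at least that $f$ be non-affine on $[y_0, x_0]$).
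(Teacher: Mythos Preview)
Your argument is correct and is genuinely different from the paper's. The paper proceeds via the mean value theorem: it finds $\theta_1\in[y_0,y_0+v]$ and $\theta_2\in[x_0-v,x_0]$ with $f'(\theta_1)=(f(y_0+v)-f(y_0))/v$ and $f'(\theta_2)=(f(x_0)-f(x_0-v))/v$, and then uses monotonicity of $f'$ (from $f''\ge0$) to compare them; a case split on whether $v<(x_0-y_0)/2$ is used to ensure the two intervals are disjoint. Your convex-combination (majorization) argument avoids both the differentiability hypothesis and the case split, so it is more elementary and applies to any convex $f$ without smoothness assumptions. You are also right to flag the strictness issue: the paper's proof has the same gap (it asserts $f'(\theta_1)<f'(\theta_2)$ from $f''\ge0$, which only yields $\le$ in general), and in both approaches the strict inequality genuinely needs strict convexity, which holds for the intended application $f(x)=x^p$, $p>1$.
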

\begin{proof}
We need to show that $f(x_0)-f(x_0-v) > f(y_0+v)-f(y_0)$ for $y_0<x_0$ and $0<v<x_0-y_0$.

Suppose first that $v<(x_0-y_0)/2$.
Due to the mean value theorem, there exist values $\theta_1\in[y_0,y_0+v],\theta_2\in[x_0-v,x_0],$ such that
$f'(\theta_1)=(f(y_0+v)-f(y_0))/v$ and 
$f'(\theta_2)=(f(x_0)-f(x_0-v))/v$.
Since $f$ is convex (so $f''(x) \ge 0$) and $\theta_1\le y_0+v < x_0-v\le \theta_2$, we have $f'(\theta_1)< f'(\theta_2)$, proving the claim.

If on the other hand $v\ge(x_0-y_0)/2$, then define 
$w=x_0-(y_0+v) < (x_0-y_0)/2$ and note that $x_0-w=y_0+v$ and $y_0+w=x_0-v$.
The claim $f(x_0)-f(y_0+v) > f(x_0-v)-f(y_0)$ can now be shown exactly as above.
\end{proof}

\begin{lemma}\label{cl:LPadjSched}
For any schedule $S$,
moving a job from a machine where it is assigned together with a set of jobs of total size at least $W/m$ 
to a machine with minimum load strictly improves the $\ell_p$ norm.
\end{lemma}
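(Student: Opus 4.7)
The plan is to reduce the claim to the convexity statement of Lemma~\ref{lem:convex} applied to $f(x)=x^p$, which is (strictly) convex for $p>1$. Let $a$ be the machine from which the job $j$ is being moved, let $v$ be its size, and let $b$ be a machine of minimum load in $S$. Since only $L_a$ and $L_b$ change, and all other $L_i^p$ terms are unaffected, it suffices to show
\begin{equation*}
(L_a-v)^p+(L_b+v)^p < L_a^p+L_b^p.
\end{equation*}

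First I would verify the hypotheses needed to invoke Lemma~\ref{lem:convex} with $x_0=L_a$, $y_0=L_b$, and shift $v$. By assumption, the jobs on $a$ other than $j$ have total size at least $W/m$, so $L_a-v \ge W/m$, and in particular $L_a > W/m$ (since $v>0$). Because the average load is exactly $W/m$ and some machine exceeds it, the minimum load satisfies $L_b < W/m$; in particular $b\neq a$. Combining these two facts,
\begin{equation*}
L_a - v \;\ge\; W/m \;>\; L_b,
\end{equation*}
so $0 < v < L_a-L_b$, which is exactly the condition $v<x_0-y_0$ required by Lemma~\ref{lem:convex}. Moreover $x_0>y_0\ge 0$ holds since loads are nonnegative and $L_a>L_b$.

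Applying Lemma~\ref{lem:convex} to the convex function $f(x)=x^p$ (for $p>1$) gives $(L_a-v)^p+(L_b+v)^p < L_a^p+L_b^p$, and therefore $\sum_i L_i(S')^p < \sum_i L_i(S)^p$, where $S'$ is the schedule after the move. Taking $p$th roots, the $\ell_p$ norm strictly decreases, proving the lemma. The only subtlety, and the one I would be careful to spell out, is the argument that $L_b<W/m$ strictly; this is what promotes Lemma~\ref{lem:convex}'s hypothesis $v<x_0-y_0$ from a weak to a strict inequality and hence ensures a \emph{strict} improvement rather than merely a non-increase.
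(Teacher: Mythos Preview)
Your proposal is correct and follows essentially the same approach as the paper: both reduce to Lemma~\ref{lem:convex} with $f(x)=x^p$, setting $x_0$ to the load of the source machine, $y_0$ to the minimum load, and $v$ to the moved job's size, and both use $x_0-v\ge W/m>y_0$ to verify the hypothesis $v<x_0-y_0$. Your write-up is slightly more explicit than the paper's in justifying why the minimum load is strictly below $W/m$ (via the averaging argument) and in noting $b\neq a$, but the underlying argument is identical.
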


\begin{proof}
Denote the schedule after the move by $S'$.
We show that $\sum_{i=1}^m f(L_i(S')) < \sum_{i=1}^m f(L_i(S))$, where $f(x)=x^p$ (for some $p>1$).
Denote the size of the job to be moved by $v>0$, the current load of its machine by $x_0$, where $x_0-v\ge W/m$
by assumption, and the current minimum load by $y_0<W/m$. Now we can apply Lemma \ref{lem:convex}.
\end{proof}

The following corollary follows from Lemma~\ref{cl:LPadjSched}. 

\begin{corollary}\label{cor:avgIsOpt}
For any schedule $S$, $||S||_p \ge (\sum_{i=1}^m (W/m)^p)^{1/p}$.
\end{corollary}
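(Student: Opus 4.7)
The cleanest route is to argue directly by convexity rather than trying to iterate Lemma~\ref{cl:LPadjSched} on discrete moves (which may not actually reach an all-equal configuration since the jobs have fixed sizes). So my plan is to apply Jensen's inequality to the convex function $f(x)=x^p$ for $p>1$, evaluated at the load vector $(L_1(S),\dots,L_m(S))$.

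First I would note that $\sum_{i=1}^m L_i(S) = W$ by definition of $W$ as the total processing time. Applying Jensen's inequality to the convex function $f(x) = x^p$ then gives
\[
\frac{1}{m}\sum_{i=1}^m L_i(S)^p \;\ge\; \left(\frac{1}{m}\sum_{i=1}^m L_i(S)\right)^p \;=\; \left(\frac{W}{m}\right)^p.
\]
Multiplying through by $m$ and taking the $p$th root yields
\[
\|L(S)\|_p \;=\; \left(\sum_{i=1}^m L_i(S)^p\right)^{1/p} \;\ge\; \left(m\left(\tfrac{W}{m}\right)^p\right)^{1/p} \;=\; \left(\sum_{i=1}^m \left(\tfrac{W}{m}\right)^p\right)^{1/p},
\]
which is exactly the stated bound.

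Alternatively, one can cite Lemma~\ref{cl:LPadjSched} more directly in spirit: the lemma says that whenever some machine has load at least $W/m$ and we can peel off a bit of work to a strictly underloaded machine, the $\ell_p$ norm strictly decreases. If we allowed \emph{fractional} moves (infinitesimal transfers from a maximum-load machine to a minimum-load machine), this process converges to the all-equal configuration with load $W/m$ on every machine, whose $\ell_p$ norm is precisely the right-hand side of the corollary. Either route is essentially just convexity of $x^p$, and the calculation is routine; there is no real obstacle, so I would present the Jensen argument as the proof since it is a one-liner and avoids any continuous-limit bookkeeping.
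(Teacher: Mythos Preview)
Your proof is correct. The Jensen argument is a genuinely different route from the paper's: the paper proves the corollary by repeatedly applying Lemma~\ref{cl:LPadjSched}, explicitly allowing \emph{fractional} moves (``parts of jobs'') from overloaded machines down to underloaded ones until the flat configuration $W/m$ is reached, with the $\ell_p$ norm strictly decreasing at every step. This is exactly your ``alternative'' paragraph. Your primary argument via Jensen's inequality is more direct and self-contained: it bypasses Lemma~\ref{cl:LPadjSched} (and the supporting Lemma~\ref{lem:convex}) entirely and needs only the single observation $\sum_i L_i(S)=W$. Both arguments are at bottom the same convexity of $x^p$, but Jensen packages it in one line without any continuous-process bookkeeping, whereas the paper's route has the virtue of reusing a lemma it already needs elsewhere.
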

\begin{proof}
We apply Lemma \ref{cl:LPadjSched} repeatedly (if possible, i.e.\ if the load is not already exactly $W/m$ on every machine) and also allow parts of jobs to be moved (everything that is above a load of $W/m$ on some machine). Eventually we reach a flat schedule with a load of $W/m$ everywhere, and the $\ell_p$ norm is improved in every step.
\end{proof}

\begin{lemma}\label{cl:LPoneHugeJob}
In any optimal schedule $S$, any job with processing time greater than $\frac{W}{m}$, i.e.\ a huge job, will be scheduled on a machine without any other jobs.
\end{lemma}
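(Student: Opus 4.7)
The plan is to argue by contradiction, leveraging the strict-improvement guarantee of Lemma~\ref{cl:LPadjSched}. Suppose some optimal schedule $S$ has a huge job $j$ with processing time $v(j) > W/m$ assigned to a machine $\ell$ that also contains at least one other job $j'$, of size $v = v(j') > 0$. Let $x_0 = L_\ell(S)$ and let $y_0$ be the minimum load over all machines of $S$.

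First I would observe that $y_0 < W/m$. Indeed, since the average load is exactly $W/m$ and machine $\ell$ has load $x_0 \ge v(j) + v > W/m$, some other machine must be strictly below the average. Next, after removing $j'$ from $\ell$, the residual load is $x_0 - v \ge v(j) > W/m > y_0$, so the hypothesis of Lemma~\ref{cl:LPadjSched} is met: $j'$ is assigned together with a set of jobs (containing $j$) of total size at least $W/m$. Hence moving $j'$ from $\ell$ to a minimum-load machine produces a schedule $S'$ with strictly smaller $\ell_p$ norm, contradicting the optimality of $S$. It follows that no optimal schedule can place a huge job on a machine with any other job.

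The only delicate point, which is already handled by Lemma~\ref{cl:LPadjSched} (via Lemma~\ref{lem:convex}), is that the improvement is \emph{strict}. This strictness is what distinguishes the $\ell_p$ case ($p>1$) from makespan and machine covering, where Lemmas~\ref{cl:MConeHugeJob} and~\ref{cl:MCnumJobs} had to construct a particular optimal schedule witnessing the huge-jobs-isolated property. Here strict convexity of $x\mapsto x^p$ gives the stronger conclusion that \emph{every} optimal schedule has this property, so Assumption~\ref{ass:hugeJob} holds automatically for the $\ell_p$ objective once we set $U=W/m$.

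I do not anticipate a real obstacle: the main lemma is already in place; the proof is essentially a one-line invocation once we verify the two inequalities $x_0 - v > W/m$ and $y_0 < W/m$, both of which follow directly from $v(j) > W/m$ together with a simple averaging argument on the total load $W$.
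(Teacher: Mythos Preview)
Your proposal is correct and follows essentially the same approach as the paper: both argue by contradiction, verifying that the presence of another job alongside a huge job places that job ``together with a set of jobs of total size at least $W/m$'' so that Lemma~\ref{cl:LPadjSched} applies and yields a strict improvement. The paper phrases the move as relocating the non-huge jobs ``one by one'' to the minimum-load machine, but a single move already gives the contradiction, exactly as you present it; your explicit check that $y_0 < W/m$ via averaging matches the paper's observation that the minimum-load machine must have load below $W/m$ whenever a huge job exists.
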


\begin{proof}
There can  be at most $m-1$ huge jobs, else the total processing time of the jobs would be more than $W$. In a schedule with a huge job, the machine with the minimum load must have a load  less than $\frac{W}{m}$ (and cannot contain a huge job), else the total processing time of the jobs would be more than $W$.

If, in the optimal schedule, there is a huge job scheduled with other 
jobs, we can move these jobs, one by one, to the machine with minimum load. By Lemma~\ref{cl:LPadjSched}, this process decreases the $\ell_p$ norm, contradicting that we started with an optimal schedule. 
\end{proof}

\begin{lemma}\label{cl:LPnumJobs}
In any optimal schedule $S$, there are at most $\frac1\eps$ non-small jobs scheduled on each machine.
\end{lemma}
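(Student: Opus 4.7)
}

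The plan is to argue by contradiction, using Lemma \ref{cl:LPadjSched} as the key tool. Suppose, for the sake of contradiction, that $S$ is an optimal schedule and some machine $j$ carries strictly more than $1/\eps$ non-small jobs. By the assumption that $1/\eps$ is an integer, this means machine $j$ carries at least $1/\eps+1$ non-small jobs, each of which has processing time strictly greater than $\eps U = \eps W/m$. Hence the load $x_0$ of machine $j$ satisfies $x_0 > (1/\eps+1)\cdot\eps U > U = W/m$.

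Next, I would observe that, by Lemma \ref{cl:LPoneHugeJob} applied to the optimal schedule $S$, any huge job in $S$ occupies its machine alone; since machine $j$ holds more than one non-small job, every non-small job on $j$ must in fact be large (processing time in $(\eps U, U]$). Moreover, since $\sum_{i=1}^m L_i(S) = W$ and $L_j(S) > W/m$, some other machine must have load strictly less than $W/m$; let $y_0$ denote the minimum load in $S$, so $y_0 < W/m$. Note that the machine attaining this minimum cannot hold a huge job, since a huge job alone already has load $>W/m$, so it is a legitimate target for moving a job onto.

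Now I would pick any large job $v$ on machine $j$ (for instance, the smallest one) and apply Lemma \ref{cl:LPadjSched} to move it to the minimum-load machine. The remaining non-small jobs on $j$ number at least $1/\eps$, each of size $>\eps U$, so their combined size exceeds $(1/\eps)\cdot \eps U = U = W/m$; this is precisely the hypothesis $x_0-v\ge W/m$ required by Lemma \ref{cl:LPadjSched}. Invoking that lemma yields a schedule with a strictly smaller $\ell_p$ norm, contradicting the optimality of $S$.

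I do not anticipate a genuine obstacle: the only delicate point is making sure that the ``removed job'' leaves behind enough total processing time on machine $j$ so that Lemma \ref{cl:LPadjSched} is applicable, and that is handled by the counting $(1/\eps)\cdot\eps U = U$. The additional check that no huge job sits on the destination machine follows immediately from the bound $y_0 < W/m$ together with Lemma \ref{cl:LPoneHugeJob}, so the swap is always legitimate.
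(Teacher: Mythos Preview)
Your argument is correct and follows essentially the same route as the paper: assume a machine $j$ carries more than $1/\eps$ non-small jobs, use Lemma~\ref{cl:LPoneHugeJob} to rule out huge jobs on $j$, note that removing one large job still leaves total size exceeding $W/m$ on $j$, and then invoke Lemma~\ref{cl:LPadjSched} to contradict optimality. If anything, you are slightly more explicit than the paper in verifying the hypothesis $x_0-v\ge W/m$ of Lemma~\ref{cl:LPadjSched}; your observation that the minimum-load machine carries no huge job is harmless but not actually needed, since Lemma~\ref{cl:LPadjSched} places no restriction on the destination machine.
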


\begin{proof}
By Lemma~\ref{cl:LPoneHugeJob}, in an optimal schedule, any machine with a huge job will have only one job.

In $S$, let machine $i$ be the machine with the minimum load and assume that some machine $j \ne i$ has more than $\frac1\eps$ large jobs.  The load of $j$ is at least $(1+\eps)\frac{W}{m}$ and, hence, the load of $i$ is strictly less than $\frac{W}{m}$.  This implies that $i$ has less than $\frac1\eps$ large jobs. By Lemma~\ref{cl:LPadjSched}, moving a large job from $j$ to $i$ will decrease the $\ell_p$ norm, contradicting that $S$ is an optimal schedule. 
\end{proof}

From Lemma \ref{cl:LPnumJobs}, $v = \frac1\eps$.  Using this value with Fact \ref{fact:schGenAdv} of the general framework, gives the following.
\begin{fact}
The online algorithm with advice, based on the general framework, uses at most
 $\frac{2}{\eps}\left(\log\left(3\frac{\log(1/\eps)}{\log(1+\eps)}\right)\right) + 4$
bits of advice per request.
\end{fact}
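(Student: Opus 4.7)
The plan is to derive the claimed advice bound by substituting the value of $v$ established in Lemma~\ref{cl:LPnumJobs} into the generic advice bound of Fact~\ref{fact:schGenAdv}. First I would recall that, from the formal advice definition, the number of bits used to encode a machine pattern satisfies
\[
\beta(v) \le \left\lceil \log(2 + \msTypes^v) \right\rceil < v\log\left(\frac{3\log(1/\eps)}{\log(1+\eps)}\right) + 1,
\]
and that by Fact~\ref{fact:schGenAdv} the total number of advice bits per request is at most $\log\left(\frac{3\log(1/\eps)}{\log(1+\eps)}\right) + \beta(v) + 3$.

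Next I would plug in $v = 1/\eps$, which is justified by Lemma~\ref{cl:LPnumJobs} (every optimal schedule has at most $1/\eps$ non-small jobs per machine, so the machine-pattern vectors have length at most $1/\eps$). This yields an overall bound of
\[
\left(1 + \tfrac{1}{\eps}\right)\log\left(\frac{3\log(1/\eps)}{\log(1+\eps)}\right) + 4
\]
bits per request.

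Finally, I would use the assumption $0 < \eps < 1/2$ to simplify the leading coefficient: since $\eps < 1$, we have $1 + 1/\eps \le 2/\eps$, which gives the claimed bound of $\tfrac{2}{\eps}\log\left(\tfrac{3\log(1/\eps)}{\log(1+\eps)}\right) + 4$. There is essentially no technical obstacle here; the statement is a direct arithmetic consequence of Lemma~\ref{cl:LPnumJobs} and Fact~\ref{fact:schGenAdv}, and the only care needed is to ensure that the coefficient simplification via $\eps < 1$ is explicitly noted so that the $(1+1/\eps)$ factor is absorbed into the cleaner $2/\eps$ factor.
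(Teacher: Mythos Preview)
Your proposal is correct and matches the paper's approach exactly: the paper's entire justification is the sentence immediately preceding the Fact, which states that $v=\frac{1}{\eps}$ by Lemma~\ref{cl:LPnumJobs} and then invokes Fact~\ref{fact:schGenAdv}. You have simply spelled out the substitution and the final simplification $1+\tfrac{1}{\eps}\le \tfrac{2}{\eps}$ (valid since $\eps<1$) that the paper leaves implicit.
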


\begin{theorem}\label{thm:LPratio}
Given a request sequence $\sigma$, $U = \frac{W}{m}$ and an $\eps$, $0 < \eps < 1/2$, the general framework  schedules the jobs of $\sigma$ such that the resulting schedule has an $\ell_p$ norm of at most $(1 + 2\eps)\OPT$.
\end{theorem}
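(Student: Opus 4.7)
\medskip

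\noindent\textbf{Proof proposal for Theorem \ref{thm:LPratio}.}
The plan is to combine the pointwise load guarantee from the general framework (Theorem \ref{thm:genFrame}) with Minkowski's inequality for the $\ell_p$ norm, and then absorb the additive error using the flat-schedule lower bound on $\OPT$ given by Corollary \ref{cor:avgIsOpt}.

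First, I would verify that the framework applies at all. We need an optimal schedule $\Sstar$ satisfying Assumption \ref{ass:hugeJob}, i.e., every huge job (processing time $>U=W/m$) is alone on its machine. Lemma \ref{cl:LPoneHugeJob} says exactly that in every optimal schedule this property holds, so we may pick any optimal $\Sstar$ and set it as the reference schedule for the framework. Lemma \ref{cl:LPnumJobs} additionally gives $v \le 1/\eps$, so the framework is well-defined with the advice budget stated above.

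Next I would invoke Theorem \ref{thm:genFrame} for this $\Sstar$ and $U=W/m$, which yields, for every machine $i$,
\[
L_i(\Salg) \le (1+\eps)\,L_i(\Sstar) + \eps U .
\]
Taking the $\ell_p$ norm of the load vector and applying Minkowski's inequality (the triangle inequality for $\ell_p$) gives
\[
\|L(\Salg)\|_p \le (1+\eps)\,\|L(\Sstar)\|_p + \eps U \cdot \Bigl(\sum_{i=1}^{m} 1\Bigr)^{1/p}
= (1+\eps)\,\OPT + \eps\, \frac{W}{m}\, m^{1/p}.
\]

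The only nontrivial step is to show that the additive term $\eps\,(W/m)\,m^{1/p} = \eps\, W\, m^{1/p-1}$ is bounded by $\eps\,\OPT$. This is where Corollary \ref{cor:avgIsOpt} enters: applied to any schedule, and in particular to $\Sstar$, it gives
\[
\OPT = \|L(\Sstar)\|_p \;\ge\; \Bigl(\sum_{i=1}^{m}(W/m)^p\Bigr)^{1/p} = W\, m^{1/p-1} .
\]
Substituting this back yields
\[
\|L(\Salg)\|_p \le (1+\eps)\,\OPT + \eps\,\OPT = (1+2\eps)\,\OPT,
\]
which is the claim. The main (and only) conceptual step is the observation that the flat schedule bounds $\OPT$ from below, which converts the framework's per-machine additive slack into a multiplicative $\eps$ factor on the $\ell_p$ objective; the rest is Minkowski plus routine substitution.
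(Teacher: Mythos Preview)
Your proposal is correct and follows essentially the same route as the paper: verify Assumption~\ref{ass:hugeJob} via Lemma~\ref{cl:LPoneHugeJob}, apply the per-machine bound from Theorem~\ref{thm:genFrame}, split with Minkowski's inequality, and absorb the additive $\eps\,(W/m)\,m^{1/p}$ term using the flat-schedule lower bound from Corollary~\ref{cor:avgIsOpt}. The paper's proof is line-for-line the same argument, only written slightly more compactly as a single chain of inequalities.
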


\begin{proof}
By Lemma~\ref{cl:LPoneHugeJob}, Assumption~\ref{ass:hugeJob} holds and Theorem~\ref{thm:genFrame} applies.  

The algorithm schedules the jobs such that
\begin{align*}
\|L(\Salg)\|_p &= \left(\sum_{i=1}^m\left(L_i(\Salg)\right)^p\right)^{1/p} \\
            &\le \left(\sum_{i=1}^m\left((1+\eps)L_i(\Sstar) + \eps\frac{W}{m}\right)^p\right)^{1/p} & \text{by Theorem~\ref{thm:genFrame}} \\
            &\le \left(\sum_{i=1}^m \left((1+\eps)L_i(\Sstar)\right)^p\right)^{1/p} + \left(\sum_{i=1}^m\left(\eps\frac{W}{m}\right)^p\right)^{1/p}\\
            &\le (1+\eps)\OPT + \eps \OPT & \text{by Corollary~\ref{cor:avgIsOpt}}\\
            &= (1+2\eps)\OPT ~,
\end{align*}
where we have used the Minkowski inequality in the third line.
\end{proof}

\section{Lower Bound for Scheduling}\label{sec:lb}

Boyar et al.~\cite{BoyarKLL14} showed that at least $(n - 2N)\log N$ bits of advice in total 
(i.e.\ at least $\left(1 - \frac{2N}{n}\right)\log N$ bits per request)  
are needed for any online bin packing algorithm with advice to be optimal.  
Using a similar technique, we show that $(n - 2m)\log m$ bits of advice in total (at least $\left(1 - \frac{2m}{n}\right)\log m$ bits of advice per request) are required for any online scheduling algorithm with advice on $m$ identical machines to be optimal for makespan, machine cover or the $\ell_p$ norm. 

Let 
\begin{align*}
  k &= n - 2m, \\
  \sigma_1 &= \left<\half{k+2},\half{k+3},\ldots,\half{k+m+1},\half{2},\ldots,\half{k+1}\right> ~\text{and} \\
  \sigma_2 &= \left<x_1,x_2,\ldots,x_m\right> ~, 
\end{align*}
where $x_i$ will be defined later in an adversarial manner.  The entire adversarial request sequence will be $\sigma = \left<\sigma_1,\sigma_2\right>$.  This sequence will be chosen such that the adversary will have a balanced schedule (a load of $1$ on each machine) while any algorithm using less than $k\log m$ bits of advice will not. That is,
such an algorithm will have at least one machine with load greater than $1$, and, hence, at least one machine with load less than $1$. Such an algorithm will, therefore, not be optimal for makespan, machine cover or the $\ell_p$ norm.

\begin{fact}\label{fact:uniqueSum}
Every subset of the requests of $\sigma_1$ has a unique sum that is less than $1/2$.
\end{fact}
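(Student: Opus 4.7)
The plan is to observe that $\sigma_1$ consists exactly of the distinct dyadic rationals $2^{-i}$ for $i = 2, 3, \dots, k+m+1$ (just listed in a particular order — first the $m$ smaller values, then the $k$ larger ones), and then to exploit the uniqueness of binary representations of numbers in $[0,1)$.

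First I would bound all subset sums. The sum of the entire sequence is
\[
\sum_{i=2}^{k+m+1} \frac{1}{2^i} \;=\; \frac{1}{2} - \frac{1}{2^{k+m+1}} \;<\; \frac{1}{2},
\]
so any subset sum is at most this quantity and hence strictly less than $1/2$. This takes care of the ``less than $1/2$'' half of the claim.

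Next I would prove uniqueness by contradiction: suppose two distinct subsets $A, B \subseteq \{2, 3, \dots, k+m+1\}$ satisfy $\sum_{i \in A} 2^{-i} = \sum_{i \in B} 2^{-i}$. Cancel the common indices to get $\sum_{i \in A \setminus B} 2^{-i} = \sum_{i \in B \setminus A} 2^{-i}$, where now $A \setminus B$ and $B \setminus A$ are disjoint and, since $A \neq B$, at least one of them is nonempty (in fact both are, since they have the same positive sum). Let $j$ be the smallest index appearing in $(A \setminus B) \cup (B \setminus A)$, and assume without loss of generality $j \in A \setminus B$. Then the left-hand side is at least $2^{-j}$, while the right-hand side, whose indices are all strictly greater than $j$ and bounded above by $k+m+1$, is at most
\[
\sum_{i=j+1}^{k+m+1} \frac{1}{2^i} \;<\; \frac{1}{2^{j}},
\]
a contradiction. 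Hence $A = B$, proving uniqueness.

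There isn't really a hard step here — the entire argument is a textbook uniqueness-of-binary-representation proof combined with the geometric series tail bound. The only thing to be mindful of is that we need strict inequality in the tail bound (which is immediate because the sum is finite and starts at $j+1 > j$), and that we correctly identify $\sigma_1$ as a set of distinct powers $2^{-i}$ ranging over a contiguous block of exponents, so both the sum bound and the uniqueness argument apply cleanly.
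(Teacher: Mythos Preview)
Your proof is correct. The paper actually states this as a \emph{Fact} without any proof at all, leaving it to the reader as self-evident; your argument via the geometric series bound and uniqueness of binary representation is exactly the intended justification and fills in precisely what the paper omitted.
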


Let $T$ be the set of all possible schedules on $m$ identical machines for the requests of $\sigma_1$.  The adversary will schedule each of the first $m$ requests of $\sigma_1$ on a distinct machine. This distinguishes the $m$ machines from one another. Let $V$ be the set of all possible schedules of the last $k$ requests of $\sigma_1$ onto the $m$ machines, given that the first $m$ requests of $\sigma_1$ were each scheduled on a distinct machine. Note that $V \subset T$ and that $|V| = m^k$.  Let $S^\ADV_{\sigma_1} \in V$ be the adversarial schedule of the jobs of $\sigma_1$. 
Define $x_i = 1 - L_i(S^\ADV_{\sigma_1})$. Note that using Fact~\ref{fact:uniqueSum} we have that the $m$ values $x_i$, $1 \leq i \leq m$, are distinct. Further, note that $\sigma$ allows for a balanced schedule, where all machines have load $1$. 

\begin{observation}\label{obs:diffSched} 
For every $S_{\sigma_1} \in T \setminus S^\ADV_{\sigma_1}$, every possible scheduling of the jobs of $\sigma_2$ into $S_{\sigma_1}$ results in a schedule $S_{\sigma}$ such that there are at least $2$ machines $i$ and $j$, where $L_i(S_{\sigma}) < 1$ and $L_j(S_{\sigma}) > 1$.
\end{observation}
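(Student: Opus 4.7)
I would argue by contradiction: suppose there is some $S_{\sigma_1} \in T \setminus \{S^{\ADV}_{\sigma_1}\}$ and some scheduling of $\sigma_2$ on top of it that avoids the required overloaded/underloaded pair. First, I would compute the total processing time of $\sigma$:
\[
\sum_{i=1}^m L_i(S^\ADV_{\sigma_1}) + \sum_{i=1}^m x_i
= \sum_{i=1}^m\bigl(L_i(S^\ADV_{\sigma_1}) + (1-L_i(S^\ADV_{\sigma_1}))\bigr) = m.
\]
Since the total load in $S_\sigma$ is likewise $m$ while, by the contradictory assumption, no machine lies strictly below $1$ together with another strictly above $1$, every machine in $S_\sigma$ must have load exactly $1$.

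Next, I would invoke Fact~\ref{fact:uniqueSum} to note that every $L_i(S^\ADV_{\sigma_1})$, being a subset sum of $\sigma_1$, is strictly less than $1/2$; hence $x_i > 1/2$ for every $i$. Consequently, no machine in $S_\sigma$ can hold two $\sigma_2$-jobs without overshooting load $1$, so each machine must carry exactly one $\sigma_2$-job. Let $\pi$ be the permutation of $\{1,\dots,m\}$ with $x_{\pi(i)}$ placed on machine $i$. The balance equation on machine $i$ then gives
\[
L_i(S_{\sigma_1}) \;=\; 1 - x_{\pi(i)} \;=\; L_{\pi(i)}(S^\ADV_{\sigma_1}).
\]

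Finally, I would apply Fact~\ref{fact:uniqueSum} a second time: the equality of the subset sums $L_i(S_{\sigma_1}) = L_{\pi(i)}(S^\ADV_{\sigma_1})$ forces the underlying subsets of $\sigma_1$-jobs to coincide. Thus the multiset of parts of $\sigma_1$ assigned to the $m$ machines in $S_{\sigma_1}$ is identical to the one in $S^\ADV_{\sigma_1}$. Since the machines are identical (so that $T$ is just the collection of partitions of $\sigma_1$ into at most $m$ parts, as witnessed by $|V|=m^k$), this means $S_{\sigma_1}=S^\ADV_{\sigma_1}$, contradicting the choice of $S_{\sigma_1}$.

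The only delicate step is the last identification: one must be sure that matching loads implies matching schedules, which is exactly where the unique-subset-sum property of $\sigma_1$ (all sizes being distinct negative powers of two with the specific spacing chosen in the construction) is indispensable. Everything else is bookkeeping involving the total mass $m$ and the elementary bound $x_i>1/2$.
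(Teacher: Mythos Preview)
Your proof is correct and follows essentially the same approach as the paper's: both arguments use the total mass $m$ to reduce to the perfectly balanced case, exploit $x_i>1/2$ to force exactly one $\sigma_2$-job per machine, and then invoke Fact~\ref{fact:uniqueSum} to conclude $S_{\sigma_1}=S^{\ADV}_{\sigma_1}$. The only difference is organizational---the paper first disposes of the ``two $\sigma_2$-jobs on one machine'' case and then argues by contradiction, whereas you assume balance first and derive the one-per-machine property from it---and your explicit remark that $T$ must be read as a set of partitions (which is indeed what $|V|=m^k$ reflects) makes precise a point the paper leaves implicit.
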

 
\begin{proof}
The sum of the processing times of all jobs of $\sigma_1$ is less than $1/2$ which implies that the processing time for each $x_i$ is greater than $1/2$.  Therefore, any machine that schedules more than one job from $\sigma_2$ will have a load greater than $1$. It follows that such a schedule also has a machine that does not have any job from $\sigma_2$, and, hence, has a load less than $1$.  We therefore consider a schedule $S_{\sigma}$ that schedules a single job from  $\sigma_2$ on each machine. 

Since the sum of the processing times of all the jobs of $\sigma$ is $m$, note that if we have a machine with a load greater than $1$ then there must be a machine with load less than $1$. We can therefore assume by contradiction that in $S_{\sigma}$ all machines have a load exactly $1$.  As each job $x_i$ of $\sigma_2$ is scheduled on a distinct machine, we have that in $S_{\sigma}$ the total processing time of the jobs from $\sigma_1$ on the machine that has job $x_i$ is exactly $1-x_i$.  Fact~\ref{fact:uniqueSum} implies that $S_{\sigma_1}$ equals $S^\ADV_{\sigma_1}$, a contradiction. 
\end{proof}

We are now ready to prove the main theorem of the section.

\begin{theorem}\label{thm:schdLB}
Any online algorithm with advice needs at least $(n - 2m)\log m$ bits of advice in order to be optimal for the makespan problem, machine cover problem and the $\ell_p$ norm problem, where $m$ is the number of machines and $n$ is the length of the request sequence.  
\end{theorem}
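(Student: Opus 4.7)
The plan is to combine Observation~\ref{obs:diffSched} with an information-theoretic counting argument over the adversary's possible choices for $S^\ADV_{\sigma_1}$.

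First, I would verify that for each of the three objectives (makespan, machine cover, $\ell_p$ norm), the balanced schedule with all machine loads equal to $1$ is strictly the unique optimum for $\sigma$: it is attainable (schedule $\sigma_1$ as $S^\ADV_{\sigma_1}$ and place each $x_i$ on the machine of load $1-x_i$), and any schedule with some load $<1$ together with some load $>1$ is strictly worse for the makespan (some load exceeds $1$), for the machine cover (some load is less than $1$), and for the $\ell_p$ norm (by Corollary~\ref{cor:avgIsOpt}, the flat schedule uniquely minimises the norm). Combined with Observation~\ref{obs:diffSched}, this implies that any optimal algorithm on input $\sigma$ must produce exactly the schedule $S^\ADV_{\sigma_1}$ on the prefix $\sigma_1$; otherwise no extension of it to $\sigma_2$ can be optimal.

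Next I would count the distinct adversarial choices. The first $m$ items of $\sigma_1$ have pairwise distinct sizes and are placed by the adversary on distinct machines, so by symmetry among the identical machines we may identify the adversarial choice with the placement of the remaining $k = n-2m$ items. Each of these items may be assigned to any of the $m$ machines, giving $m^k$ distinct adversarial schedules $S^\ADV_{\sigma_1}$. Once such a schedule is fixed, $\sigma_2$ is determined from it (by $x_i = 1 - L_i(S^\ADV_{\sigma_1})$), producing $m^k$ distinct input sequences $\sigma$.

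Now the counting argument. The algorithm's actions on $\sigma_1$, in particular the schedule it produces for $\sigma_1$, are a function only of the requests of $\sigma_1$ (which are identical for all $m^k$ adversarial choices) and the advice bits received during $\sigma_1$. To be optimal on all $m^k$ inputs, the algorithm must produce a distinct schedule for $\sigma_1$ in each case, so the advice string delivered during $\sigma_1$ must take at least $m^k$ distinct values. Hence at least $\log(m^k) = k \log m = (n-2m)\log m$ bits of advice must be delivered during $\sigma_1$, and since this is a prefix of the total advice, the total is at least $(n-2m)\log m$ bits, proving the theorem.

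The main subtlety (and the only non-routine point) is justifying that any optimal algorithm must reproduce $S^\ADV_{\sigma_1}$ exactly, rather than any of the other $|T \setminus \{S^\ADV_{\sigma_1}\}|$ schedules. This is precisely what Observation~\ref{obs:diffSched} provides together with the fact that a non-balanced schedule is strictly suboptimal for all three objectives. The rest is a clean pigeonhole over the $m^k$ adversarial inputs.
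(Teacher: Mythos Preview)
Your proposal is correct and follows essentially the same approach as the paper: both combine Observation~\ref{obs:diffSched} with Corollary~\ref{cor:avgIsOpt} to show that optimality forces the algorithm to reproduce $S^{\ADV}_{\sigma_1}$ on the prefix, and then apply a pigeonhole count over the $m^k$ adversarial choices in $V$. The only cosmetic difference is framing: the paper assumes all advice is given upfront and lets the adversary pick $S^{\ADV}_{\sigma_1}$ outside the set $\{\ALG(\sigma_1,u): u\in U\}$, whereas you phrase it contrapositively as needing $m^k$ distinct advice strings during $\sigma_1$; these are equivalent.
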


\begin{proof} 
Let $ALG$ be an arbitrary (deterministic) online algorithm with advice for the given scheduling problem.  Let $\Salg_{\sigma_1}$ be the schedule produced by $ALG$ for $\sigma_1$.  If $\Salg_{\sigma_1} \in T \setminus V$, i.e.\ $\Salg_{\sigma_1}$ is such that the first $m$ requests are not scheduled on distinct machines, then $\Salg_{\sigma_1} \ne S^\ADV_{\sigma_1}$, and, by Observation~\ref{obs:diffSched}, $\Salg_{\sigma}$ is not balanced.  Therefore, we will assume that the algorithm will schedule the first $m$ requests on $m$ distinct machines, i.e.\ $\Salg_{\sigma_1} \in V$.

Assume that the online algorithm with advice receives all the advice bits in advance. This only strengthens the algorithm and, thus, strengthens our lower bound.  Let $\ALG(s, u)$ be the schedule produced by $ALG$ for request sequence $s$ when receiving advice bits $u$.  Since $ALG$ gets less than $k\log m$ bits of advice, it gets as advice some $u \in U$ for some advice space $U$, $|U| < m^k$.  It follows that $|\{\ALG(\sigma_1, u) | u \in U \}| < m^k = |V|$.  Therefore, given $ALG$, $S^\ADV_{\sigma_1}$ is chosen by the adversary such that $S^\ADV_{\sigma_1} \in T \setminus \{\ALG(\sigma_1, u) | u \in U\}$.  Note that this choice defines $\sigma_2$.
 
We now have, by Observation~\ref{obs:diffSched}, that $\Salg_{\sigma}$ has at least $2$ machines $i$ and $j$ such that $L_i(\Salg_{\sigma}) < 1$ and $L_j(\Salg_{\sigma}) > 1$.  Given that there is a balanced schedule with all machines having load $1$ for $\sigma$, $\Salg_{\sigma}$ is not optimal for makespan due to machine $j$, $\Salg_{\sigma}$ is not optimal for machine cover due to machine $i$, and $\Salg_{\sigma}$ is not optimal for the $\ell_p$ norm by Corollary~\ref{cor:avgIsOpt}.
\end{proof}

\section{Comparison to the Semi-Online Advice Model}

For a request sequence of length $n$, the na\"ive conversion of the algorithms described previously from the online advice model to the semi-online advice  model uses less than a total of $n \left(\bpAdvUp\right)$ bits of advice for bin packing and $n \left( \log\left(3\log_{1+\eps}1/\eps\right) + \beta(v) + 3 \right)$ for scheduling. It is possible, as we describe below, to do better in the more powerful semi-online model, but the amount of advice is still linear in $n$. This follows from the observation that only for the first $N$ ($m$) request does the advice include an actual bin (machine) pattern.

\subsection{$\BPA$ in the semi-online advice model}

Initially, a single bit $w$ (as described above) is written to the advice tape to indicate if $N \le 1/\eps$ or not. If so, $n\log(1/\eps)$ bits are written to the tape to indicate the bin index in which to pack each item.  

If $N > 1/\eps$, the optimal number of bins $N$ is encoded, using a self-delimiting encoding scheme~\cite{BockKKR14}, and written to the advice tape, using $\lceil \log N \rceil + 2 \lceil \log \lceil \log N \rceil \rceil$ bits. Then, for each of the $N$ optimal bins, the bin pattern is written, using  $p< (1/\eps)\log(2/\eps^2) + 1$ bits, followed by a bit to indicate if small items are packed in the bin. 

For each request, a bit is written to the advice tape to indicate if the item is small or large. If the requested item is small, an additional bit is written to indicate if the small item should be packed in the current bin packing small items or the next. If the requested item is large, the item type is written, using $t < \log(2/\eps^2) + 1$ bits, and an additional bit is written to indicate if the large item is packed in a bin with or without small items. 

The total amount of advice used is less than $$1 + \lceil \log N \rceil + 2 \lceil \log \lceil \log N \rceil \rceil + N\left(\frac{1}{\eps}\log\left(\frac{2}{\eps^2}\right) + 1\right) + N + n\left(\log\left(\frac{2}{\eps^2}\right) + 2\right)~.$$

In \cite{AngelopoulosDKRR2015}, it is shown that a linear amount of advice in total is required for an algorithm with advice for the bin packing problem to achieve a competitive ratio of $7/6$. Therefore, the algorithms described here uses an optimal amount of advice (up to constant factors) for competitive ratios at most $7/6$. 

\subsection{Scheduling framework in the semi-online advice model}

As with $\BPA$, the machine pattern of each machine and the machines that have small items scheduled can be written to the front of the advice tape. In this case, $m$ unlike $N$ is known to algorithm and does not need to be written to the advice tape.   

Initially, the $m$ machine patterns, ordered according to the permutation of the machines as described previously, are written on the advice tape using $p<m\beta(v)$ bits. Then, for each request, the type of each job is written to the advice tape, using $t<\log\left(3\log(1/\eps)/\log(1 + \eps)\right) + 1$ bits. If the job is small, then an additional bit is written to indicate if the small job should be scheduled on the previous machine or the current machine.

The total amount of advice used is less than $$m\beta(v) + n\left( \log\left(\frac{3\log(1/\eps)}{\log(1 + \eps)}\right) + 2\right)~.$$

The framework that we presented here works for any objective function of the form $\sum_{i=1}^m f(L_i)$ such that if $x\le (1+\eps)y$ then $f(x)\le (1+O(1)\eps)f(y)$ (which includes makespan, machine cover and the $\ell_p$ norm). In \cite{Dohrau2015}, an algorithm that uses constant advice in total and achieves a competitive ratio of $1+\eps$ is presented for the makespan objective on $m$ identical machines. It remains open whether or not it is possible to improve on the framework here for other objective functions such as machine cover and the $\ell_p$ norm.

\section{Conclusions}\label{sec:con}
We gave online algorithms with advice for bin packing and scheduling problems 
that, with a constant number of bits per request, achieve competitive ratios arbitrarily close to $1$.
Since this is not possible for all online problems, it would be interesting to prove similar results for additional online problems. Furthermore, an interesting question is to find the right trade-off between the (constant) number of bits of advice and the achievable competitive ratios for the problems we study and other problems. 

\paragraph{Acknowledgements}
We would like to thank Shahin Kamali and Alejandro L{\'o}pez-Ortiz for useful discussions about the bin packing problem. 

\bibliographystyle{alpha}
\bibliography{advice}

\end{document}